\newcommand{\amenable}{amenable}
\newcounter{foo}
\newtheorem{theorem}[foo]{Theorem}
\newtheorem{lemma}[foo]{Lemma}
\newtheorem{proposition}{Proposition}[section]
\newtheorem{claim}[foo]{Claim}
\begin{document}

\title{\Large Wireless Connectivity and Capacity}
\author[M. Halld\'orsson]{Magn\'us M. Halld\'orsson}
\address[M. Halld\'orsson]{School of Computer Science\\
Reykjavik University\\
Reykjavik 101, Iceland}
\email{mmh@ru.is}

\author[P. Mitra]{Pradipta Mitra}
\address[P. Mitra]{School of Computer Science\\
Reykjavik University\\
Reykjavik 101, Iceland}
\email{ppmitra@gmail.com}
\date{}

\maketitle


\begin{abstract} \small\baselineskip=9pt
Given $n$ wireless transceivers located in a plane, a fundamental
problem in wireless communications is to construct a strongly
connected digraph on them such that the constituent links can be
scheduled in fewest possible time slots, assuming the SINR model of
interference.

In this paper, we provide an algorithm that connects an arbitrary
point set in $O(\log n)$ slots, improving on the previous best bound
of $O(\log^2 n)$ due to Moscibroda.  This is complemented with a
super-constant lower bound on our approach to connectivity.
An important feature is that the algorithms allow for bi-directional
(half-duplex) communication.

One implication of this result is an improved bound of $\Omega(1/\log
n)$ on the worst-case capacity of wireless networks, matching the best
bound known for the extensively studied average-case.

We explore the utility of oblivious power assignments, and show that 
essentially all such assignments result in a worst case bound of
$\Omega(n)$ slots for connectivity.
This rules out a recent claim of a $O(\log n)$ bound using oblivious power. 
On the other hand, using our result we show that 
$O(\min(\log \Delta, \log n \cdot (\log n + \log \log \Delta)))$ slots suffice, where $\Delta$ is
the ratio between the largest and the smallest links in a minimum spanning tree of the points. 

Our results extend to the related problem of minimum latency
aggregation scheduling, where we show that aggregation scheduling with
$O(\log n)$ latency is possible, improving upon the previous best
known latency of $O(\log^3 n)$.
We also initiate the study of network design problems in the SINR
model beyond strong connectivity, obtaining similar bounds 
for biconnected and $k$-edge connected structures.
\end{abstract}

\section{Introduction}
A key architectural goal in wireless adhoc networks is to ensure that
each node in the network can communicate with every other node
(perhaps by routing through other nodes). This requires that the nodes
be connected through a communication overlay.  The problem can be
abstracted as such: Given $n$ points on the plane (each representing a
wireless node), how \emph{efficiently} can one ensure connectivity
among the points?

The notion of efficiency in a wireless setting is crucially dependent on that distinguishing feature
of wireless networks: interference. Two or more simultaneous communications in the
same wireless channel interfere with each other, potentially destroying all or some of the communications.
Thus, easy as it might be to come up with a set of links (a link is an directed edge between two nodes) that connect the $n$ nodes, it is highly 
unclear whether or not one can schedule these links in a small amount of time. 
This fundamental problem has been
the focus of substantial amount of research \cite{MoWa06,moscibroda06b,Moscibroda07,kumar2005,Dousse03impactof}.

The model of interference is of course a crucial
aspect. Traditionally, all theoretical results have been in graph-based
models, with either fixed radii (unit-disc graphs and quasi-unit
disc graphs) and variable radii (geometric radio networks and
protocol model), 
while engineering research has focused on largely non-algorithmic 
studies in more complex models.
We adopt the SINR (signal to noise and interference ratio) model,
a.k.a.\ the physical model, of interference. The main differences are
two-fold: the received signal is a decaying function of distance (rather than
being on/off), and interferences from multiple transmitters sum up.
While more involved analytically, the SINR models is known to
be more realistic than graph-based ones, as shown theoretically as well as
experimentally~\cite{GronkMibiHoc01,MaheshwariJD08,Moscibroda2006Protocol}.

The first worst-case guarantee for wireless connectivity 
in the SINR model was provided by Moscibroda and
Wattenhofer \cite{MoWa06}, who showed how to construct a strongly
connected set of links that can 
be scheduled in $O(\log^4 n)$ slots. This was improved to $O(\log^3
n)$ in \cite{moscibroda06b} and finally 
to $O(\log^2 n)$ by Moscibroda \cite{Moscibroda07}, which is the 
best bound currently known.


Our main result is the following: Any minimum spanning tree
(arbitrarily oriented) on $n$ nodes on the plane can be scheduled in
$O(\log n)$ slots. This immediately leads to a $O(\log n)$ worst-case
bound for strong connectivity, by orienting the tree towards an arbitrary
root and then using the same tree with the orientation reversed.
Thus we improve the connectivity bound by a $\log n$ factor, while 
giving at the same time a simple characterization of the resultant network in
terms of the natural MST structure.

The connectivity problem is closely related to the capacity of a
wireless network, a subject of a vast literature.
The computational throughput capacity of a network is the sustained
rate at which data can be aggregated to an information sink, which is
really the \emph{raison d'\^etre} of wireless sensor networks.  
At each time step, data is introduced at each source node.
If the aggregation function is compressible, like sum or max, 
only one item of data needs to be forwarded on each link.
A short schedule that is repeated as needed yields high throughput using buffering.  
Bounds for the connectivity problem lead therefore immediately to
equivalent bounds for worst-case capacity of wireless network (for
compressible functions) \cite{Moscibroda07}.

Indeed, this particular application also 
highlights the specific benefits of adopting the SINR model.  The best
known bound on the average-case capacity in the SINR model is
$\Omega(1/\log n)$, given in the influential work of Gupta and Kumar
\cite{Kumar00}.  On the other
hand, whereas the average case throughput capacity in the protocol model is
$\Theta(1/\log n)$ \cite{Kumar00}, the worst-case capacity
is only $\Theta(1/n)$ \cite{Moscibroda07}.

We also study a variation of the connectivity problem inspired by the
sensor networking application mentioned above is known as
\emph{minimum-latency aggregation scheduling}. In this variation, one
seeks a tree aggregating to a information sink (as before), but with
the additional requirement that links must be scheduled after all
links below them in the tree are scheduled.  
A straightforward modification of our algorithm achieves this in
optimal $O(\log n)$ slots, improving on the $O(\log^3 n)$ result previously known \cite{Li:2010:MAS:1868521.1868581}.

We conjecture that a logarithmic bound is necessary for connectivity.
One reason is that it matches the average-case bound, which has been a
highly researched topic \cite{Kumar00}.
We also give a construction that shows that our approach cannot yield a constant upper bound.
It is distinguished from all previous lower bound constructions in the SINR model in that it is (necessarily) not based on showing that pairs of links are incompatible. Without being able to show the existence of large ``cliques'', hardness results in the SINR with power control are hard to come by.

An important -- and perhaps surprising -- feature of our method is that it allows for bidirectional
communication. Namely, the links scheduled in each slot can
communicate in either direction without affecting or being affected by 
the other scheduled links.
This is important in a communication
setting because of the need to supply acknowledgements and flow
control, and is sometimes viewed as indispensable.
The previously studied algorithms 
\cite{MoWa06,moscibroda06b,Moscibroda07} all assumed unidirectional
communication. In fact, it was taken for granted that
unidirectionality could not be avoided, sometimes with references to 
lower bounds from graph-based models \cite{moscibroda06b}. Our algorithm uses
different power for the two directions of each link; we show that
to be unavoidable by constructing instances for which 
the use of symmetric power on bidirectional links 
forces the use of $\Omega(n)$ slots.

Power assignments are yet another important issue in wireless protocols.
It is preferable if power settings are locally computable.
A power assignment is \emph{oblivious} if it depends only on the
length of the respective link.
Recently, a $O(\log n)$-slot connectivity
algorithm was claimed that used a particular oblivious power
assignment \cite{DBLP:conf/wdag/KowalskiR10}.
Unfortunately, there are problems with the proof (specifically, in Lemma 5 whose proof is not in the conference version) as acknowledged by one of the authors \cite{Kowalski11}. 
Actually, that general approach is bound to fail;
namely, we show that essentially all oblivious power
assignments (including the one used in \cite{DBLP:conf/wdag/KowalskiR10})
require $\Omega(n)$ slots in the worst case.

On the other hand, when the edge lengths in the MST differ by a factor
of at most $\Delta$, then combining the results here with recent work
\cite{SODA11} gives a $O(\log n (\log\log \Delta + \log n))$ slot
connectivity algorithm that uses a certain oblivious power assignment
called \emph{mean power}.


We use our approach as a starting point for the first excursion into
network design problems beyond strong connectivity.
By applying the connectivity routine a constant number of times, we
find that we can solve other connectivity problems with asymptotically
the same number of slots, including biconnectivity and $k$-edge
connectivity.

\paragraph{Outline of the paper.}
We introduce the SINR model and related notation in
Sec.~\ref{sec:model}, followed by quick overview of related work in Sec.~\ref{sec:related}.
The connectivity algorithm is given in Sec.~\ref{sec:conn},
with a subsection on a limitation result.
We extend the method to a bi-directional model
of communication in Sec.~\ref{sec:bidi}, and examine the power of oblivious
power in Sec.~\ref{sec:oblivious}.
Extensions to other connectivity problems are treated in Sec.~\ref{sec:design}.

\section{Model and Preliminaries.}
\label{sec:model}

Given is a set $P = \{p_1, p_2, \ldots, p_n\}$ of points on the Euclidean plane.
A link $\ell = (s, r)$ is a directed edge from point $s$ (the ``sender") to point $r$ (the ``receiver").
The goal is compute a set of links that strongly connect $P$ and to schedule them
in $O(\log n)$ slots.

The distance between two points $x$ and $y$ is denoted $d(x,y)$. The asymmetric distance from link $\ell = (s, r)$ to link $\ell' = (s', r')$ is the distance from
$\ell$'s sender to $\ell'$'s receiver, denoted $d_{\ell \ell'} = d(s, r')$. The length of link $\ell$ is denoted  simply $\ell$.
For a link set $L$, let $\Delta$ denote the ratio between the maximum and minimum length of a link in $L$.

When a point $s$ transmits as a sender of link $\ell$, it uses some transmission power $P_{\ell}$.
We adopt the \emph{physical model} (or \emph{SINR model})
of interference: a communication over a link $\ell = (s, r)$ succeeds if and only if the
following condition holds:
\begin{equation}
 \frac{P_{\ell}/\ell^\alpha}{\sum_{\ell' \in S \setminus  \{\ell\}}
   P_{\ell'}/d_{\ell' \ell}^\alpha + N} \ge \beta, 
 \label{eq:sinr}
\end{equation}
where $\alpha > 2$ is the path loss constant, $N$ is a universal constant denoting the ambient noise, $\beta \ge 1$ denotes the minimum
SINR (signal-to-interference-noise-ratio) required for a message to be successfully received,
and $S$ is the set of concurrently scheduled links in the same \emph{slot} with $\ell$.
We say that $S$ is \emph{SINR-feasible} (or simply \emph{feasible}) if (\ref{eq:sinr}) is
satisfied for each link in $S$. 

We will use the notion of affectance of \cite{HW09}, as refined in \cite{KV10}, 
which is a scaled interference measure from one link on another, defined as 
  \[ a_{\ell}(\ell') 
     = \min\left\{1, c_{\ell'} \frac{P_{\ell}/d_{\ell \ell'}^\alpha}{P_{\ell'}/\ell'^\alpha}\right\}
  \] 
where $c_{\ell'} = \beta/(1 - \beta N {\ell'}^\alpha/P_{\ell'})$ is a constant
depending only on the length and power of the link $\ell'$. 
As in previous work \cite{MoWa06,moscibroda06b,Moscibroda07,KesselheimSoda11}, we assume that
powers can be scaled up as needed, which implies that the effect of the noise $N$ (and the coefficient $c_{\ell'}$) can be ignored.
It holds that $\ell'$ is feasible in $S$ iff
\begin{equation}
\sum_{\ell \in S} a_{\ell}(\ell') \leq 1, 
\end{equation}
where $S$ is the set of simultaneously transmitting links. We will
sometimes use this version of the SINR constraint instead of Eqn. \ref{eq:sinr}.

For a set of points $P$, we will use $T(P)$ to denote a minimum spanning tree over the points in $P$.
We will simply use $T$ when $P$ is clear from the context. Naturally, $T$ contains undirected edges,
but when scheduling directed links, we need to orient $T$ in some way. When no ambiguity
arises, we will simply use $T$ to describe a particular oriented version of $T$.

\section{Related Work}
\label{sec:related}
Abstract problems capturing aspects of wireless networks have a long history, but the adoption of the SINR
model in theoretical analysis has been a comparatively recent phenomenon. The first rigorous worst case
results were achieved in the seminal work of Moscibroda and
Wattenhofer \cite{MoWa06} (which involved the problem studied in this paper). Ever since, numerous paper
have appeared on the SINR model. For a recent overview, see 
\cite{GoussevskaiaPW10}. Apart
from the connectivity, another fundamental problem is the capacity problem, where one wants to find
the maximum feasible subset of a given set of links. First rigorous results for the capacity problem 
were established in \cite{GHWW09}, followed
by a number of other results.
Kesselheim achieved a breakthrough recently by proving the first $O(1)$-approximation algorithm for capacity
with power control \cite{KesselheimSoda11}, whose techniques we adopt into our analysis. In this regard,
this work can be considered to bring the approaches to connectivity and capacity together. Other recent progresses
made include a $O(1)$-approximation capacity algorithm for oblivious powers \cite{SODA11}, and the study of topological properties of wireless communication maps \cite{stoc_topology11}.

\section{$O(\log n)$ connectivity in the SINR model}
\label{sec:conn}

The starting point of our analysis is a criteria for wireless capacity
recently developed by Kesselheim \cite{KesselheimSoda11}. Kesselheim
showed that any set of links for which this criteria holds (defined in
Eqn.~\ref{feasibilitycond} below) can be scheduled in a single slot,
and provided an efficient algorithm to do so. We shall 
call this algorithm \textbf{Schedule}, which is described in Section 3 of
\cite{KesselheimSoda11}. 
For reference, 
we also include the algorithm in Appendix \ref{app:a}.

Our approach is as follows. We show, via a related criteria, that
given any $T' \subseteq T(P)$, Eqn.~\ref{feasibilitycond} holds for
a constant fraction of the links in $T'$. Thus, a constant fraction of the
tree can be scheduled in a single step (and this holds
recursively). Naturally this process will end in $O(\log n)$
steps. Our analysis applies to any orientation of $T$. Thus to achieve
a strongly connected network, we simple schedule two trees. One is a
copy of $T$ oriented towards some arbitrary root, another one oriented
away from the same root. Thus any two nodes in the network can
communicate by first routing from the source to the root, and then
routing from the root to the destination.

Our goal is then to prove the following.
\begin{theorem}
Let $P$ be any set of points on the Euclidean plane. Let $T = T(P)$ be a minimum spanning tree on the points of $P$,
arbitrarily oriented. Then algorithm {\bf Connect} schedules $T$ in $O(\log n)$ slots.
\label{mainth1}
\end{theorem}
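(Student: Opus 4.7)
The plan is to realize algorithm \textbf{Connect} as a halving scheme: in each slot, extract from the current set of unscheduled MST links a constant fraction that satisfies Kesselheim's feasibility criterion, and feed that subset through the \textbf{Schedule} subroutine of \cite{KesselheimSoda11}. Since Kesselheim guarantees single-slot schedulability for any link set meeting the criterion, a halving argument immediately yields an $O(\log n)$-slot schedule for $T$, which is exactly Theorem~\ref{mainth1}. So the whole theorem reduces to the following structural lemma: for every $T' \subseteq T$, at least a constant fraction $T'' \subseteq T'$ is Kesselheim-feasible under a suitably chosen power assignment (most naturally $P_\ell = c\,\ell^\alpha$, the mean/linear power).

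The central tool is the classical sparsity property of the MST: if $e$ is an MST edge then no other MST edge of length at most $|e|$ can have both endpoints inside a disk of radius $|e|$ around an endpoint of $e$ (otherwise swapping would produce a shorter tree). Quantitatively, a disk of radius $r$ intersects only $O((r/\ell)^2)$ MST edges of length at least $\ell$, because their endpoints form a $\Theta(\ell)$-separated point set in that disk. With power $P_\ell = c\ell^\alpha$, the affectance $a_\ell(\ell')$ simplifies to roughly $(\ell/d_{\ell\ell'})^\alpha \cdot (\ell'/\ell')^0$ up to constants, so the sum of affectances from length-class $\ell$-links onto a fixed link $\ell'$ reduces to a geometric sum that converges because $\alpha > 2$. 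This yields a uniform upper bound on total affectance into (and out of) any link coming from the entire MST.

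To produce $T''$, I plan to bucket $T'$ into length classes of doubly-exponentially spaced lengths and, inside each class, apply a further geometric bucketing that selects only links whose receivers lie in a fixed grid-cell coset. Inside one such coset the MST sparsity bound gives bounded local density, and summing across coarser/finer length classes uses the telescoping behaviour above. A random or greedy choice among a constant number of cosets/classes then leaves a $\Theta(|T'|)$-fraction of links whose pairwise affectances, when summed, sit below the Kesselheim threshold; this is precisely the condition needed to invoke \textbf{Schedule}.

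The main obstacle is handling \emph{both} directions of affectance simultaneously --- the criterion bounds not only incoming interference $\sum_{\ell \in T''} a_\ell(\ell')$ but also a symmetric dual quantity; MST edges of very different lengths interact asymmetrically under any fixed oblivious power, so the bucketing must give slack in both directions at once. Once this is resolved (using the fact that linear power equalises received signal levels and turning the two-sided bound into essentially one geometric sum), the inductive step is immediate: schedule $T''$ in one slot and recurse on $T' \setminus T''$. Since $|T'|$ shrinks by a constant factor per slot, $O(\log n)$ slots suffice for the whole MST $T$, proving Theorem~\ref{mainth1}.
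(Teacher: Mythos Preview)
Your high-level skeleton is right: iterate \textbf{Schedule} on the residual link set, show a constant fraction survives each round, and conclude $O(\log n)$. The geometric input---MST sparsity in the form ``a disc of radius $r$ meets only $O(1)$ endpoints of MST edges of length $\ge \Theta(r)$''---is also the right one. But the way you propose to connect the two has a real gap.

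Kesselheim's criterion (Eqn.~\ref{feasibilitycond}) asks, for each $\ell'$, that $\sum_{\ell \le \ell'} f_\ell(\ell')$ be small: this is an \emph{incoming} bound from shorter links. MST sparsity, however, naturally controls the \emph{outgoing} direction: fixing a short link $\ell$ and a scale, there are only $O(1)$ longer MST links nearby, so $\sum_{\ell' \ge \ell} f_\ell(\ell')$ is $O(1)$. These are not the same statement, and your bucketing scheme does not bridge them---there is no reason a fixed long link $\ell'$ cannot have many short MST links clustered near it. The paper closes this gap not by bucketing but by a double-counting/averaging step: call a set \emph{amenable} if the outgoing sum $\sum_{\ell'\ge\ell} f_\ell(\ell')\le\rho$ for every $\ell$; then $\sum_{\ell'}\sum_{\ell\le\ell'} f_\ell(\ell') = \sum_{\ell}\sum_{\ell'\ge\ell} f_\ell(\ell') \le n\rho$, so at least $n/2$ links have incoming sum $\le 2\rho$, and a trivial sparsification gets the constant down to $\gamma$. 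This is the missing idea; once you have it, no length-class or grid-coset bucketing is needed at all.

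Two smaller points. First, $f_\ell(\ell')$ is a power-free quantity, so there is nothing to choose when verifying Eqn.~\ref{feasibilitycond}; the power assignment is produced afterward by \textbf{Schedule} and is \emph{not} oblivious. Your appeal to $P_\ell=c\ell^\alpha$ (linear power, not mean power) to ``equalise received signal levels'' is therefore both unnecessary and dangerous: the paper itself recalls that linear power needs $\Omega(n)$ slots on some instances, so any argument that actually relied on it would be wrong. Second, the ``two-sided obstacle'' you flag does not exist for Theorem~\ref{mainth1}: Kesselheim's condition is one-sided, and the bi-directional issue only arises later in Section~\ref{sec:bidi}.
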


\begin{algorithm}                      
\caption{Connect(An arbitrarily oriented MST $T$ on point set $P$)}          
\label{alg1}                           
\begin{algorithmic}[1]                    
     \STATE $L \gets T$
     \WHILE{$L \neq \emptyset$}
     	\STATE Use Algorithm \textbf{Schedule} to find a feasible subset $L' \subseteq L$ \label{alg:findfeasibleset}
	\STATE $L \gets L \setminus L'$
     \ENDWHILE
\end{algorithmic}
\label{alg1fig}
\end{algorithm}

For two links $\ell, \ell'$, define $d(\ell, \ell') = \min\{d_{\ell \ell'}, d_{\ell' \ell}\}$. For links $\ell \leq \ell', \ell \neq \ell'$ define
the function $f_\ell(\ell') = \min\left\{1, \frac{\ell^{\alpha}}{d(\ell, \ell')^{\alpha}}\right\}$. Let $f_{\ell}(\ell) = 0$ and for $\ell > \ell'$
let $f_{\ell}(\ell') = 0$. The function $f_{\ell}(\ell')$ can be thought of as a measure of how badly the link $\ell$ might affect link $\ell'$ if
they were to transmit simultaneously.

We call a set of links $L$ \textbf{\amenable} if the following holds:
for any link $\ell = (s, r)$ ($\ell$ not necessarily a member of $L$), 
\begin{equation}
\sum_{\ell' \in L, \ell' \geq \ell} f_{\ell}(\ell') \leq \rho
\end{equation}
for some constant $\rho$ to be chosen later. The concept of an amenable set is closely related to the following
theorem due to Kesselheim (the connection is made explicit in Lemma \ref{lem:constantsizeapprox}).

\begin{theorem}[\cite{KesselheimSoda11}]
Assume $L'$ is a set of links such that for all $\ell' \in L'$,
\begin{equation}
\sum_{\ell \in L', \ell \leq \ell'} f_{\ell}(\ell') \leq \gamma
\label{feasibilitycond}
\end{equation}
for a constant $\gamma = \frac{1}{4 \cdot 3^{\alpha} \cdot (4 \beta + 2)}$.
Then, $L'$ is feasible and there exists a polynomial time algorithm to find a power assignment to schedule $L'$ in a single slot.

Moreover, for any given set $L$ assume $S$ is the largest feasible subset of $L$. Then \textbf{Schedule} finds a $L''$ of size $\Omega(S)$ for which Eqn.~\ref{feasibilitycond} holds.
\label{kessel}
\end{theorem}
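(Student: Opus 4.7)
The plan is to follow what is essentially Kesselheim's two-part strategy: establish feasibility by exhibiting a suitable power assignment, then analyze the greedy selection rule that defines \textbf{Schedule} for the approximation claim.

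For the feasibility half, I would produce a power assignment $P_\ell$ depending only on $\ell$ (say linear power $P_\ell = \ell^\alpha$, or a carefully tuned variant) and show $\sum_{\ell \in L' \setminus \{\ell'\}} a_\ell(\ell') \leq 1$ for every $\ell' \in L'$. Split this sum at the length of $\ell'$. The contribution from \emph{shorter} interferers $\ell \leq \ell'$ is, by a direct calculation with the chosen power scaling, at most a constant (a factor of $\beta$ or so) times $\sum_{\ell \leq \ell'} f_\ell(\ell')$, and hence at most $O(\beta\gamma)$ by hypothesis. The contribution from \emph{longer} interferers $\ell > \ell'$ is not controlled directly by the stated hypothesis, and this is the technical heart of the argument. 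The key is that the definition uses the symmetric distance $d(\ell,\ell') = \min\{d_{\ell\ell'}, d_{\ell'\ell}\}$, so $f_{\ell'}(\ell)$ (with roles reversed, since now $\ell' \leq \ell$) captures the same closeness. I would partition longer interferers of $\ell'$ into geometric annuli around $\ell'$ of radii $2^i \ell$, bound the affectance per shell by $\beta (\ell/(2^i \ell))^\alpha$-type terms, and bound the number of links per shell by invoking the hypothesis applied at a longer link $\ell$ from the set (viewing $\ell'$ as one of its shorter neighbors). This yields a $3^\alpha$-type geometric packing factor and a further factor coming from combining the two directions — matching the $4\cdot 3^\alpha (4\beta+2)$ in the denominator of $\gamma$. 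The upshot is $\sum a_\ell(\ell') \leq O(\gamma \cdot 3^\alpha (\beta+1)) \leq 1$ with the stated $\gamma$.

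For the approximation half, I would analyze \textbf{Schedule} as a greedy procedure that processes the links of $L$ in increasing order of length and accepts $\ell$ into $L''$ precisely when $\sum_{\ell_0 \in L'',\ \ell_0 \leq \ell} f_{\ell_0}(\ell) \leq \gamma$. The algorithm clearly produces an $L''$ obeying Eqn.~\ref{feasibilitycond}. To prove $|L''| = \Omega(|S|)$ for the optimal feasible $S$, I would use a charging argument: every $\ell \in S$ that is not accepted must have been ``blocked'' by the set of already-accepted shorter links whose $f$-values pushed the running sum past $\gamma$. Charge $\ell$ fractionally to these blockers, proportionally to their $f_{\ell_0}(\ell)$ contribution. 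Because $S$ itself is feasible, it satisfies a bidirectional affectance bound (by the converse half of the feasibility analysis, applied to $S$'s own power assignment), so for any fixed accepted link $\ell_0$ the total charge it can receive from links of $S$ longer than it is $O(1)$. Summing gives $|S| \leq O(|L''|)$ as required.

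The main obstacle will be the feasibility step: the hypothesis bounds only the one-sided affectance from shorter onto longer links, whereas SINR feasibility is inherently bidirectional. Making the constants line up with exactly $\gamma = 1/(4\cdot 3^\alpha(4\beta+2))$ depends on picking a power assignment under which the longer-to-shorter affectance can be controlled by the same $f$ values, which is where the symmetric $\min$ in $d(\ell,\ell')$ and a tight annular packing argument must be combined carefully.
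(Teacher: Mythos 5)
The decisive flaw is in your plan for the feasibility half. You propose to exhibit an \emph{oblivious} power assignment ($P_\ell=\ell^\alpha$ ``or a carefully tuned variant'') and then control the longer-onto-shorter interference by an annulus-packing argument. This cannot work. Hypothesis~(\ref{feasibilitycond}), applied at a longer link $\ell$, only forces $f_{\ell'}(\ell)\le\gamma$ for each shorter $\ell'$, i.e.\ $d(\ell,\ell')\gtrsim \ell'$ --- a separation proportional to the \emph{shorter} length. So a single link $\ell$ of huge length may legally sit at distance $\approx\ell'$ from the receiver of a short link $\ell'$; under linear power its affectance on $\ell'$ is $(\ell/d_{\ell\ell'})^\alpha$, which is unbounded, and no bound on the \emph{number} of longer interferers per shell repairs the fact that one such interferer already destroys $\ell'$. (Uniform power fails symmetrically in the other direction, and anything in between fails on one side or the other: Section~\ref{sec:oblivious} of this paper proves that every smooth oblivious assignment forces $\Omega(n)$ slots on MST instances, which would contradict the theorem if an oblivious assignment could realize it.) The actual proof, reproduced as the power-assignment loop of \textbf{Schedule} in Appendix~\ref{app:a}, is inherently non-oblivious: powers are defined recursively from the longest link down, $P_{\ell_n}=1$ and $P_{\ell_i}=4\beta\sum_{j>i}P_{\ell_j}\ell_i^\alpha/d(s_j,r_i)^\alpha$, so that each link's received signal is by construction $4\beta$ times the aggregate interference it receives from longer selected links. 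Longer interferers are thus handled for free, and the entire difficulty moves to the shorter interferers, whose powers themselves depend on still-longer links; bounding their total affectance requires a cascading charging argument driven by Eqn.~(\ref{feasibilitycond}), and that is where the factors $3^\alpha$ and $4\beta+2$ in $\gamma$ arise. Your sketch has the difficulty located on the wrong side of the length split and chooses a power family for which the theorem is false.

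Your second half (the greedy selection returns $|L''|=\Omega(|S|)$) is closer to the mark: one does need a converse lemma stating that \emph{any} set feasible under \emph{some} power assignment satisfies an averaged version of Eqn.~(\ref{feasibilitycond}) with an $O(1)$ right-hand side, followed by a charging argument against the accepted links. But as written you invoke ``the converse half of the feasibility analysis,'' which in your proposal does not exist in a usable form, since your feasibility analysis is the broken oblivious-power one. That converse lemma has to be proved directly from the SINR constraints of $S$ under its own (arbitrary, possibly non-oblivious) powers; it does not follow from your first part even if that part were repaired.
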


\begin{lemma}
If a set $L$ of size $n$ is amenable, then there are $\Omega(n)$ links in $L$ that can be scheduled in a single slot.
\label{lem:constantsizeapprox}
\end{lemma}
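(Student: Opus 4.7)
\textbf{Proof plan for Lemma~\ref{lem:constantsizeapprox}.} The strategy is to exhibit an explicit subset $L' \subseteq L$ of size $\Omega(n)$ that satisfies the Kesselheim criterion of Eqn.~\ref{feasibilitycond}; invoking Theorem~\ref{kessel} then yields a single-slot schedule for $L'$, which is all that the lemma demands. The core of the argument is a double-counting trick that converts the \emph{row-sum} control provided by amenability, $\sum_{\ell' \in L,\, \ell' \geq \ell} f_\ell(\ell') \leq \rho$, into an average \emph{column-sum} bound of the form required by Eqn.~\ref{feasibilitycond}.

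First I would sum the amenable inequality over every $\ell \in L$ and swap the order of summation, obtaining
\[
  \sum_{\ell' \in L} \Bigl(\sum_{\ell \in L,\, \ell \leq \ell'} f_\ell(\ell')\Bigr) \;\leq\; \rho\, n.
\]
Writing $X(\ell') := \sum_{\ell \in L,\, \ell \leq \ell'} f_\ell(\ell')$, this says that the average value of $X(\ell')$ over $\ell' \in L$ is at most $\rho$. A standard averaging (Markov/pigeonhole) step then bounds the ``bad'' set $B := \{\ell' \in L : X(\ell') > \gamma\}$ by $|B| \leq \rho n/\gamma$, where $\gamma$ is the constant from Theorem~\ref{kessel}.

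Now I would exercise the freedom left in the definition of amenability by pinning down $\rho$ small enough relative to $\gamma$; any choice with, say, $\rho \leq \gamma/2$ forces $|B| \leq n/2$. Setting $L' := L \setminus B$ thus gives $|L'| \geq n/2 = \Omega(n)$. For each $\ell' \in L'$, restricting the sum to $L'$ only drops non-negative terms, so
\[
  \sum_{\ell \in L',\, \ell \leq \ell'} f_\ell(\ell') \;\leq\; X(\ell') \;\leq\; \gamma,
\]
and Eqn.~\ref{feasibilitycond} holds on $L'$. Theorem~\ref{kessel} then certifies feasibility in a single slot.

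The only substantive design decision is committing to the constant $\rho$ in the amenability definition; everything else is a one-line double-count followed by averaging and a monotonicity remark. There is no real obstacle here, and no probabilistic argument is needed. The same row-to-column conversion is what will make the recursion in Theorem~\ref{mainth1} go through, since showing that each subtree encountered by \textbf{Connect} remains amenable (rather than merely feasible) is what allows us to keep peeling off constant fractions for $O(\log n)$ rounds.
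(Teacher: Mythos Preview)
Your double-counting and averaging are exactly the argument the paper uses, but the step where you ``exercise the freedom left in the definition of amenability by pinning down $\rho$ small enough relative to $\gamma$'' is not actually available. The constant $\rho$ is not a free parameter of this lemma: it is fixed once and for all so that Lemma~\ref{lem:approx} goes through, and there the proof ends with ``assuming that $\rho$ is at least twice the implicit constant in the bound above.'' That implicit constant comes from a packing count (nine points per disc of radius $1/4$) and the tail of $\sum_{t\ge 1} t^{1-\alpha}$, and there is no reason to expect it to be below $\gamma = \frac{1}{4\cdot 3^{\alpha}(4\beta+2)}$; in fact it is almost certainly much larger. With $\rho \ge \gamma$ your Markov bound $|B| \le \rho n/\gamma$ is vacuous, and the argument stalls.

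The paper patches exactly this point. After the same swap of summation it takes the threshold at $2\rho$ (not $\gamma$), which always leaves a set $S$ of at least $n/2$ links with $\sum_{\ell\le \ell'} f_\ell(\ell') \le 2\rho$. It then runs a greedy sparsification: process $S$ in increasing length, placing each link in the first bucket where Eqn.~\ref{feasibilitycond} still holds; since each link contributes at most $2\rho$ total incoming $f$-mass, at most $2\rho/\gamma$ buckets are ever opened, and the largest bucket has size at least $\frac{n\gamma}{4\rho}$. This extra partitioning step is what decouples the geometric constant $\rho$ from Kesselheim's constant $\gamma$, and it is the one idea your plan is missing.
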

\begin{proof}
Since $L$ is amenable, then by definition $\sum_{\ell \in L} \sum_{\ell' \geq \ell} f_{\ell}(\ell') \leq n \rho$. Rearranging,
we get $\sum_{\ell' \in L} \sum_{\ell \leq \ell'}  f_{\ell}(\ell') \leq n \rho$. By an averaging argument, there
must be a set $S$ of at least $n/2$ links for which 
\begin{equation}
\sum_{\ell \leq \ell'}  f_{\ell}(\ell') \leq 2 \rho \ .
\end{equation}
This is almost exactly Eqn.~\ref{feasibilitycond} except for the use
of a different constant. To achieve the correct 
constant, a simple sparsification suffices. Start an empty set.
Go through links in $S$ in increasing order of length, putting the link in the first set in which Eqn.~\ref{feasibilitycond} holds.
Start a new set if necessary. Clearly, no more that $\frac{2 \rho}{\gamma}$ sets will be necessary.

Thus, a set of size $\frac{n \cdot \gamma}{4 \rho}$ can be found for which Eqn.~\ref{feasibilitycond} holds.
\end{proof}

The most important step is to prove that $T$ is amenable.
\begin{lemma}
Let $T'' \subseteq T$ where $T = T(P)$ is a minimum spanning tree on the points in $P$. Then $T''$ is amenable.
\label{lem:approx}
\end{lemma}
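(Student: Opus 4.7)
The plan is to bound the amenability sum by a constant using a partition-and-packing argument. Fix any link $\ell = (s,r)$; writing $\ell$ also for its length, partition the relevant terms by distance class. Let $E_0 = \{\ell' \in T'' : \ell' \geq \ell, \ d(\ell, \ell') \leq \ell\}$ and, for $k \geq 1$,
$$E_k = \{\ell' \in T'' : \ell' \geq \ell, \ 2^{k-1}\ell < d(\ell, \ell') \leq 2^k \ell\}.$$
For $\ell' \in E_k$ with $k \geq 1$ one has $f_\ell(\ell') \leq \ell^\alpha / (2^{k-1}\ell)^\alpha = O(2^{-\alpha k})$, while on $E_0$ the bound is trivially $1$. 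It therefore suffices to bound $|E_k|$ by a quantity whose product with $2^{-\alpha k}$ sums to $O(1)$.

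The key combinatorial input is a packing lemma for Euclidean MSTs: for any point $p$, any $R \geq 0$, and any $L > 0$, the number of MST edges of length at least $L$ having at least one endpoint in $B(p, R)$ is $O(1 + (R/L)^2)$. For $R \leq L$, a direct consequence of the MST cycle property is that for any two such edges $(u_i, w_i)$ and $(u_j, w_j)$ one has $d(w_i, w_j) \geq \max(|u_i w_i|, |u_j w_j|) \geq L$: the MST path between $w_i$ and $w_j$ contains both edges, and the added chord $(w_i, w_j)$ is at least as heavy as every edge on the induced cycle. A law-of-cosines computation at $p$, where the assumption $R \leq L$ controls the spread of the inside endpoints, then shows that the outer endpoints enjoy pairwise angular separation bounded below by a positive constant, capping their number at $O(1)$. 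For $R > L$, cover $B(p, R)$ by $O((R/L)^2)$ disks of radius $L$ and apply the previous case to each.

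Each $\ell' \in E_k$ has length at least $\ell$, and since $d(\ell, \ell') \leq 2^k\ell$ it must have an endpoint in $B(s, 2^k\ell) \cup B(r, 2^k \ell)$. Applying the packing lemma to each disk with $L = \ell$ and $R = 2^k \ell$ yields $|E_k| = O(4^k)$. Therefore
$$\sum_{\ell' \in T'', \ell' \geq \ell} f_\ell(\ell') \leq \sum_{k \geq 0} O\bigl(4^k \cdot 2^{-\alpha k}\bigr) = \sum_{k \geq 0} O\bigl(2^{(2-\alpha)k}\bigr) = O(1),$$
using $\alpha > 2$ to guarantee geometric convergence. This $O(1)$ bound furnishes the required constant $\rho$, establishing that $T''$ is amenable. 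The main obstacle is the MST packing lemma itself; the delicate point is making the angular argument robust when the inside endpoints of the long MST edges may lie anywhere in $B(p, R)$, but this is handled by refining to a cover by disks of radius $L/c$ for a small enough constant $c$, so that the spread becomes a lower-order effect absorbed into the implicit constants.
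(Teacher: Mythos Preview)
Your proposal is correct and follows essentially the same architecture as the paper's proof: an MST-based packing lemma established via an angular-separation argument, an annular decomposition by distance from $\ell$, and a geometric sum that converges because $\alpha > 2$.

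The packaging differs slightly. The paper normalizes $\ell$ to length $1$, lets $P'$ be the set of \emph{points} incident to a link of length $\ge 1$ in $T''$, and proves that any disc of radius $1/4$ contains at most nine points of $P'$; it then bounds $\sum_{\ell'} f_\ell(\ell')$ by $\sum_{p\in P'} \min\{1, d(p,s)^{-\alpha}\} + \min\{1, d(p,r)^{-\alpha}\}$ and sums over \emph{linear} annuli of width $1$. You instead count long \emph{edges} with an endpoint in a ball and use \emph{dyadic} annuli. The two packing lemmas are equivalent up to the constant-degree property of Euclidean MSTs, and linear versus dyadic annuli make no difference to the final estimate. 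One small point: your claim that the tree path from $w_i$ to $w_j$ contains both edges $(u_i,w_i)$ and $(u_j,w_j)$ actually needs $R < L/2$ rather than $R \le L$ (otherwise one cannot rule out that $u_i$ and $u_j$ lie on the same side of the cut), but you already anticipate refining to radius $L/c$, so this is not a gap.
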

\begin{proof}
Consider any link $\ell$ (not necessarily in $T$) and assume without of loss of generality that its length is $1$. 
We can do this because scaling all links to make $\ell$ of length $1$ does not change the values of the function 
$f_{\ell}(\ell')$.
To prove
amenability, we thus have to only consider links in $T''$ of length at least $1$. Let $T'$ be this set and let $P'$ be the points that are incident to at least one edge in $T'$.

First we claim,
\begin{lemma}
Any disc of radius $c_1 = 1/4$ contains at most 9 points from $P'$. 
\label{pointsincircle}
\end{lemma}
\begin{proof}
Let $D$ be a disc of radius $c_1$ and 
let $P_D$ be the set of points from $P'$ in $D$.  

We first observe that no two points $p_1, p_2 \in P_D$ have a common
neighbor in $T'$.
If $p_1$ and $p_2$ were neighbors in $T$ then a common neighbor would imply a cycle, while if they were non-neighbors,
replacing either of the edges to the common neighbor by the edge $(p_1, p_2)$ 
results in a cheaper spanning tree (since $d(p_1, p_2) \leq 2 c_1 < 1$).
Since each point in $P_D$ has a neighbor in $T'$, it holds that $|N(P_D)| \geq |P_D|$ (where $N(X) = N_{T'}(X) = \{p \in P' : \exists x \in X, (x,p) \in T'\}$ 
denotes the neighborhood of a point set $X$ in $T'$).

Let $c$ be the center of $D$, and consider any pair of points $a, b \in N(P_D)$. 
We aim to show that the angle $\angle a c b > \pi/5$, which implies the lemma.
Let $p_a$ ($p_b$) be the unique neighbor of $a$ ($b$) in $T$.
We observe first that the unique path in $T$ between $p_a$
and $p_b$ goes through neither $a$ nor $b$, since if it did, say
through $a$, then replacing $(p_a, a)$ by $(p_a, p_b)$ results in a 
smaller tree.  

Consider now the triangle $\triangle a b c$.  Let $\alpha = \angle c a b$, 
$\beta = \angle a b c$, $\gamma = \angle b c a$, and denote
$|p_1 p_2| = d(p_1, p_2)$, for points $p_1$ and $p_2$.
Note that $|ab| \ge d(a, p_a) \ge 1$, as
we could otherwise delete the edge $ap_a$ and add $ab$ to get a
better tree. Similarly, $|ab|\ge d(b, p_b) \ge 1$.
From the triangular inequality, our observations above, and the
fact that $|ab| \ge 1$, we have that
\[ |ac| \le d(a,p_a) + d(p_a,c) \le |ab| + c_1 \le |ab| (1 + c_1)\ , \]
and similarly $|bc| \le {|ab|} (1 + c_1)$.
By the sine law, $\sin \alpha / \sin \gamma = |bc|/|ab| \le 1 + c_1$,
and $\sin \beta / \sin \gamma = |ac|/|ab| \le 1 + c_1$.
For $c_1 = 1/4$, this implies that since $2\pi = \alpha + \beta + \gamma 
\le \gamma + 2 \arcsin(5/4 \cdot \sin \gamma)$, computation shows that $\gamma > \pi/5$ as claimed.
\end{proof}

Now for $\ell = (s, r)$, $\sum_{\ell' = (s',r') \in T', \ell' \geq \ell} f_{\ell}(\ell')
  \leq \sum_{p \in P'}  \left(\min\left\{1, \frac{1}{d(p,s)^{\alpha}}\right\} 
   + \min\left\{1, \frac{1}{d(p,r)^{\alpha}}\right\}\right).$
Thus it suffices to upper bound $\sum_{p \in P'} \min\left\{1,
\frac{1}{d(p, x)^{\alpha}}\right\}$ for any arbitrary point $x$ by a constant
to get the required bound.

Now take concentric circles $C_0, C_1 \ldots$ around $x$ such that the $t^{th}$ 
circle has radius $t+1$.
The proof of the following fact
can be found in Appendix B. 
\begin{lemma}
$C_0$ can be covered by $O(1)$ circles of radius $c_1$
 (where $c_1$ is the constant from Lemma \ref{pointsincircle}). 
The annulus $C_t \setminus C_{t-1}$ can be covered by $O(t)$ circles of radius $c_1$, for $t \geq 1$.
\label{ddim}
\end{lemma}
Thus by Lemma \ref{pointsincircle}, there are at most $O(t)$ points from $P'$ in $C_t \setminus C_{t-1}$. Similarly,
$C_0$ can be covered by $O(1)$ circles of radius $c_1$ (see Lemma \ref{ddim}) and thus contains $O(1)$ points from $P'$.

The distance to $x$ from any point in $C_t \setminus C_{t-1}$ is at least $t$. Then,
\begin{align*}
\lefteqn{\sum_{p \in P'}  \min\left\{1, \frac{1}{d(p, x)^{\alpha}}\right\}}  \\
& = |p \in P' \cap C_0| \cdot 1 + \sum_{t \geq 1} \sum_{p \in P' \cap (C_t \setminus C_{t-1})} \frac{1}{d(p, x)^{\alpha}} \\
& \leq O(1) + O\left(\sum_{t \geq 1} t \frac{1}{t^{\alpha}}\right) = O(1)\ ,
\end{align*}
%
for $\alpha > 2$. This completes the proof of Lemma \ref{lem:approx},
assuming that $\rho$ is at least twice the implicit constant in
the bound above.
\end{proof}

Theorem \ref{mainth1} now follows easily. By Lemma \ref{lem:approx}, the remaining set of links at each step
of the algorithm is amenable. Thus, by Lemma
\ref{lem:constantsizeapprox}, a constant factor of these links are feasible, and by Theorem \ref{kessel}, a constant
factor of those will be scheduled by {\bf Connect}. Clearly, this process terminates in $O(\log n)$ steps.

\smallskip
\noindent
\textbf{Remark.} We note that the assumption $\alpha > 2$ is necessary.
Indeed, suppose points are placed at all integer coordinates within a
large circle, so all links will be of at least unit length.
Then, when $\alpha \le 2$, it can be shown with standard methods that 
there is no feasible subset of links of size larger than $\Omega(n/\log n)$.

\subsection{A lower bound on our approach}
It is easy to construct an example where a link in $T(P)$ violates
Eqn.~\ref{feasibilitycond} (below, the set $G_1$ provides an example of that). However,
this still leaves open the possibility that the spanning tree can be partitioned into a small number
 of subsets (for example, a constant number of subsets) such that Eqn.~\ref{feasibilitycond} holds
for each of them, thus improving upon the $O(\log n)$ result. In the following theorem, we show that one
cannot partition all the points into a constant number of subsets. This, naturally, is not a lower bound
on the connectivity problem, just on our particular approach.

\begin{theorem}
For any number $c$, there exists a set of points on the line such that the minimum spanning tree $T$ cannot be 
partitioned into $\leq c$ sets $S_1, S_2 \ldots S_c$ such that Eqn.~\ref{feasibilitycond} holds for each $S_i$.
\end{theorem}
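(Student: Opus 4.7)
The plan is to exhibit, for each $c$, an explicit point set on the line whose MST cannot be partitioned into $c$ pieces each satisfying Eqn.~\ref{feasibilitycond}. The first construction I would analyze is a \emph{geometric contributor cluster}: place $K+2$ points at positions $0, L, L+1, L+3, L+7, \ldots, L+2^K-1$, where $L = 2^K$. Since the points are sorted on the line, the MST is a path consisting of a long base edge $\ell^* = (0, L)$ followed by a chain of ``contributor'' edges $L_1, L_2, \ldots, L_K$ of lengths $2^0, 2^1, \ldots, 2^{K-1}$, where $L_k$ and $L_{k+1}$ share an endpoint. The parameter $K$ will be chosen as a sufficiently large function of $c$.

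First I would verify the two key numerical facts. (a) For each contributor, $d(L_k, \ell^*) = 2^{k-1} - 1 < 2^{k-1} = |L_k|$, so $f_{L_k}(\ell^*) = 1$: every contributor has the maximum possible influence on $\ell^*$. (b) A direct computation gives $f_{L_i}(L_{i+1}) = 1$ (shared endpoint) and, for $j - i \ge 2$, $f_{L_i}(L_j) = (1/(2^{j-i}-2))^{\alpha}$. Let $m_0$ be the smallest integer with $(1/(2^{m_0}-2))^{\alpha} \le \gamma$; this is a constant depending only on $\alpha$ and $\gamma$.

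Now suppose the MST is partitioned into $c$ sets $S_1,\ldots,S_c$ each satisfying Eqn.~\ref{feasibilitycond}. The base edge $\ell^*$ lies in some class, say $S_j$. Since each contributor has $f = 1 > \gamma$ relative to $\ell^*$, no contributor can lie in $S_j$, so all $K$ contributors are distributed among the remaining $c-1$ classes. Within any single class, any two contributors $L_i, L_j$ (with $i < j$) must satisfy $f_{L_i}(L_j) \le \gamma$, which by fact (b) forces $j - i \ge m_0$. Hence each class contains at most $\lceil K/m_0\rceil$ contributors, and the $c-1$ non-$\ell^*$ classes together accommodate at most $(c-1)\lceil K/m_0\rceil$ of them. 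Choosing $K$ much larger than $m_0 c$ yields a contradiction unless $c \ge m_0 + 1$, proving the theorem for every $c \le m_0$.

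The hard part, and the one I would spend most effort on, is extending the argument beyond the regime where a single pairwise conflict suffices. The paper itself flags this by stressing that the construction is ``necessarily not based on showing that pairs of links are incompatible''. To push past the constant $m_0$, I would layer the above gadget into a hierarchical, Cantor-like construction on the line: recursively place two copies of $G_{c-1}$ far apart and join them by a new top edge $\ell^*_c$ whose heavy contributors include the top edges of both sub-copies together with one representative edge from each intermediate length scale. The key lemma to establish inductively is that any valid coloring of $G_{c-1}$ must force each of the $c-1$ available colors to appear on some heavy contributor of $\ell^*_c$, so that $\ell^*_c$ is left with no legal color. The delicate step is controlling the summation constraints across levels: one must choose the separation and spacings so that low-weight contributions from remote scales cannot be exploited to smuggle $\ell^*_c$ into an already-used color, and this is where I expect the bulk of the technical work to lie.
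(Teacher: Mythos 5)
Your first gadget is fine as far as it goes, and your analysis of it is essentially correct: the base edge receives affectance $\ge \gamma$ from every contributor, contributors within one class must be $\ge m_0$ length-scales apart, and so the construction defeats any partition into at most $m_0$ classes. But as you yourself note, this only proves the theorem for $c \le m_0$, a constant depending on $\alpha$ and $\gamma$; both sides of your counting inequality scale linearly in $K$, so taking $K$ larger never helps for $c > m_0$. The entire content of the theorem is the case of arbitrary $c$, and for that you offer only a sketch whose key inductive lemma is left unproven. That is a genuine gap, and moreover the lemma you propose ("any valid coloring of $G_{c-1}$ must force each of the $c-1$ colors to appear on some \emph{heavy} contributor of $\ell^*_c$") is not the statement that the natural induction delivers: knowing that a sub-gadget needs $c-1$ colors tells you nothing about whether each color class contains a link that is individually heavy with respect to the new top edge, and with only two sub-copies there is no reason it should.

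The paper's recursion works differently and avoids exactly this trap. $G_t$ is built by concatenating not two but $I(t)$ copies of $G_{t-1}$, each scaled to dwarf all preceding ones, and then prepending a single far-away point creating a top link $\ell_g$ that is the unique longest link. The crucial estimate is that \emph{every} link $\ell$ of every copy satisfies $f_{\ell}(\ell_g) \gtrsim \rho(G_{t-1})/2^{\alpha}$, where $\rho(G_{t-1})$ is a (possibly tiny) quantity intrinsic to the sub-gadget; the number of copies $I(t)$ is then chosen as $\Theta(\gamma/\rho(G_{t-1}))$ so that these individually small contributions cannot all coexist with $\ell_g$: if $\ell_g$'s class met every copy, the sum in Eqn.~\ref{feasibilitycond} at $\ell_g$ would exceed $\gamma$. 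Hence some entire copy is disjoint from $\ell_g$'s class, and induction on that copy gives $P(G_t) \ge P(G_{t-1}) + 1$. Note that no single link of a copy need be "heavy" against $\ell_g$ at all — the argument is an accumulation of many weak affectances, which is precisely the non-pairwise phenomenon the paper advertises. To complete your proof you would need to replace your two-copy, heavy-contributor induction with something of this form: a top link blocked by the aggregate affectance of whichever copy its class touches, with the multiplicity of copies calibrated to the (level-dependent) per-link contribution.
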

\begin{proof}
For $t \geq 1$, we will recursively construct gadgets $G_t$ such that a spanning tree on $G_t$ cannot be partitioned into $t$
sets for which Eqn.~\ref{feasibilitycond} holds. 

Since we are considering points on a line, the minimum spanning tree is simply
the edges connecting each point to its immediate neighbors to the right and left. Our theorem holds
for any orientation of the links.

A gadget $G$ is simply a set of points located on a line, with an implicit ordering from the left to the right. We will often use $G$ to mean a translated copy
of $G$ as well, which will be clear from the context.
For two gadgets $F$ and $G$, we will use $F \oplus G$
to denote the joining of the two gadgets, which is a new gadget with $|F| + |G| -1$ points. The first (starting from the left) $|F|$ points
are a copy of $F$, and the last $|G|$ points are a translated copy of $G$. In other words, the $|F|^{th}$ point is both
the ending point of the gadget $F$ and the starting point for the copy of gadget $G$. For any collection
of points (or gadget) $G$, let $L(G)$ be the diameter of $G$.

For a gadget $G$, we use $G(b)$ to mean a copy of the gadget scaled by a factor of $b$. For example,
if $G = \{-10, 0, 1, 2.5\}$, then $G(10) = \{-100, 0, 10, 25\}$.

\begin{figure*}
\begin{center}
\includegraphics[height=2.5cm]{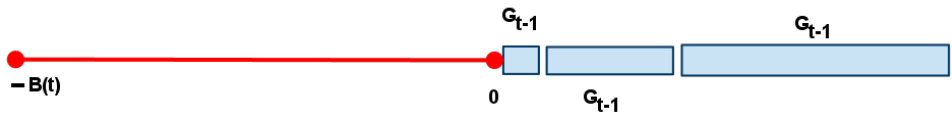}
\caption{Construction of $G_t$ from copies of $G_{t-1}$. We join a number of copies of $G_{t-1}$, each succeeding copy
scaled must larger than all the copies before it combined. All this is preceded by a huge new link between points $-B(t)$ and $0$ ($B(t)$ is very large)}
\label{gtconsfig}
\end{center}
\end{figure*}

We are ready to describe our construction. $G_1$ contains the points $\{-28, 0, 2, 6, 14\}$.
For a gadget $G$, define $\rho(G) = \min_{\ell \in T(G)} \frac{\ell^{\alpha}}{{\hat d_{\ell}(G)}^{\alpha}}$ where
$\hat d_{\ell}(G)$ is the maximum distance from either end point of $\ell$ to the left most point in $G$. 
It is easy to verify that $\rho(G) \leq 1$.
Now $G_t$ is constructed by joining copies of $G_{t-1}$ (each copy scaled much larger than preceding copies) and preceding all of it with a new point, such that
the distance between the new point and the beginning of the copies of $G_{t-1}$ is humongous. This
is informally depicted in Fig.~\ref{gtconsfig}. 

To formally define $G_t$, we first define $G'_t = \oplus_{1 \leq i \leq I(t)} G_{t -1}(h(i))$. The value of $I(t)$ is set to 
$\lceil \frac{8^{\alpha}  \gamma}{ \cdot \rho(G_{t -1})} \rceil$ where $\gamma$ is the constant from Eqn. \ref{feasibilitycond}. The scaling factors are defined as such: $h(1) = 1$, $h(j)$ (for $j > 1$) is chosen such that
$\min_{\ell \in T(G_{t -1}(h(j)))} \ell = 2 L(\oplus_{1 \leq i \leq j-1} G_{t -1}(h(i)))$. Let $G'_0$ be $G'_t$ translated so that its left most point is at location $0$. Define the number $B(t) = 4 L(G'_t)$. Then $G_t = \{-B(t)\} \cup G'_0$, i.e., a single point at location $-B(t)$, followed by the gadget $G'_0$. 

Define the partition number $P(G)$ as the minimum number of partitions of a link set required so that Eqn.~\ref{feasibilitycond}
holds for each set. We claim:
\begin{claim}
$P(G_1) \geq 2$ and $P(G_t) \geq 1 + P(G_{t-1})$ for all $t \geq 1$.
\end{claim}
Combined, these two claims clearly prove the theorem. The claim about $G_1$ is easy to verify by direct computation.
Let us prove the inductive step.
For $G_t$ consider the left-most link $\ell_g$ in $T = T(G_t)$. This is of course a link of length $B(t)$ which is by construction
the unique largest link in $G_t$ (since $B(t) = 4 L(G'_t)$). 
Thus $\{\ell \in T : \ell < \ell_g\} = T \setminus \{\ell_g\}$. 
We claim that
$\frac{\ell^{\alpha}}{d(s, r_g)^{\alpha}} +  \frac{\ell^{\alpha}}{d(s_g, r)^{\alpha}} \geq \frac{\rho(G(t-1))}{2^{\alpha}}$ for all $\ell \in T \setminus \{\ell_g\}$. To see this, consider any $\ell \neq \ell_g \in T$. 
Now $\ell$ is of course part of the $j^{th}$ copy of $G_{t-1}$ for some $j \geq 1$. 
Now let $d' = \min\{d(s, r_g), d(s_g, r)\}$.
We can  observe that 
$d' \leq \hat d_{\ell}(G_{t-1}(h(j)) + L(\oplus_{1 \leq i \leq j-1} G_{t -1}(h(i)))$. By construction, 
$\hat d_{\ell}(G_{t-1}(h(j)) \geq \ell > L(\oplus_{1 \leq i \leq j-1} G_{t -1}(h(i)))$ and thus 
$\frac{\ell^{\alpha}}{d(s, r_g)^{\alpha}} +  \frac{\ell^{\alpha}}{d(s_g, r)^{\alpha}} \geq  \frac{\ell^{\alpha}}{d'^{\alpha}} \geq \frac{\rho(G_{t-1})}{2^{\alpha}}$ (by definition of $\rho(G_{t-1}$)) proving the claim. 

Given this and noticing the value of $I(t)$ chosen in the construction of
$G'_t$, it is clear that there must be some $1 \leq j \leq I(t)$ such that no link in $T(G_{t-1}(h(j)))$
is in the same set as $\ell_g$. This completes the proof of the claim and the theorem.
\end{proof}

\section{Bi-directionality}
\label{sec:bidi}
We have worked with the uni-directional model of wireless communication so far, where the links are directed from
a sender to a receiver. The bi-directional model, in contrast, has
two-way half-duplex communication
between the nodes of a link in the same slot. The advantage of this model is that it simplifies one-hop communication protocols. Two-way communication in a single slot without worrying about mutual interference can be achieved in practice in more than one way,
and we simply take that as given. The difficulty arises in that interferences from other links are now potentially
much larger, since we have to take into account both directions of each link.

We can model the bi-directional case as follows: $L$ contains
$n$ pairs $\ell = \{n_1, n_2\}$. These two points implicitly define two unidirectional links $\ell_1 = (n_1, n_2)$ and $\ell_2 = (n_2, n_1)$. Each pair can be associated with two power levels $P_{\ell_1}$ and $P_{\ell_1}$, to be used by 
$\ell_1$ and $\ell_2$ respectively. We consider a set of pairs $L'$ feasible if for all $\ell \in L'$,
\[  \frac{P_{\ell_i}/(\ell_i)^\alpha}{\sum_{\ell' \in L' \setminus  \{\ell\}}   \sum_{k = 1}^2 P_{\ell'_k}/(d_{\ell'_k \ell_i})^\alpha + N} \ge \beta \quad \mathrm{for } i \in \{1, 2\}, \]
or equivalently,
\[ \sum_{\ell' \in L', \ell' \neq \ell} (a_{\ell'_1}(\ell_i) +
a_{\ell'_2}(\ell_i)) \leq 1 \quad \mathrm{for } i \in \{1, 2\}, \]
where affectances are as defined for the unidirectional case.

We differentiate two versions of the bi-directional model. In the \emph{symmetric} model, we insist that $P_{\ell_1} = P_{\ell_2}$
for each $\ell$. With this restriction, the model is essentially equivalent to the one
introduced in \cite{FKRV09}.
Without such a restriction,
we call it the \emph{asymmetric} model, which was briefly mentioned in \cite{KesselheimSoda11}.

First we show,
\begin{theorem}
There is an instance that 
requires $\Omega(n)$ slots for connectivity in the symmetric bi-directional model.
\label{bidirlb}
\end{theorem}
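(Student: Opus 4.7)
My plan has two ingredients: first, to derive a necessary pairwise compatibility condition forced by the symmetric-power constraint; second, to construct a point set whose MST has $\Omega(n)$ edges pairwise violating this condition.

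For the first ingredient, I take any two links $\ell_i = \{a_i, b_i\}$ ($i = 1,2$) of lengths $L_i$ with symmetric powers $P_i$ scheduled in the same slot. Writing $d = d_{\min}(\ell_1,\ell_2)$ for the smallest distance between an endpoint of $\ell_1$ and an endpoint of $\ell_2$, realized WLOG at $d(a_1, a_2)$, the SINR at $a_1$ (dropping noise and the weaker interference term) is at most $(P_1/L_1^\alpha)/(P_2/d^\alpha)$. The requirement SINR $\geq \beta$ therefore gives $P_1/P_2 \geq \beta(L_1/d)^\alpha$; symmetrically the SINR at $a_2$ gives $P_2/P_1 \geq \beta(L_2/d)^\alpha$. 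Multiplying these two ratio constraints cancels the powers entirely and yields the geometric-mean condition
\[
d_{\min}(\ell_1,\ell_2) \;\geq\; \beta^{1/\alpha}\sqrt{L_1 L_2}.
\]
This is the key leverage: in the unidirectional model the two ratios can be set independently and no such product lower bound arises, but the symmetric-power constraint couples them.

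For the second ingredient, I plan to place $n$ inner endpoints $a_1,\ldots,a_n$ inside a tiny disc of radius $\epsilon$ around the origin and $n$ outer endpoints $b_1,\ldots,b_n$ at far-away locations chosen so that each $b_i$'s unique nearest neighbor (among the other $2n-1$ points) lies in the central cluster. The MST is then forced to contain $n$ edges of the form $\{a_{\sigma(i)}, b_i\}$ of lengths $L_i \gg \epsilon$. Any two such edges have $d_{\min} \leq 2\epsilon$ (between the inner endpoints in the cluster, or zero if they share one), while $\sqrt{L_iL_j} \geq \min_k L_k$. Choosing $\epsilon$ small enough, specifically $\epsilon < \tfrac{1}{2}\beta^{1/\alpha}\min_k L_k$, violates the geometric-mean condition for every pair, so no two of these $n$ distinguished edges can share a slot and the $\Omega(n)$ bound follows.

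The hard part will be realizing the second ingredient in the Euclidean plane. By the standard $60^\circ$ MST-degree argument, at most a constant number of outer points can have the cluster as nearest neighbor at any one distance scale, and even geometrically-spaced distance shells accommodate only $O(\log n)$ outer points in a single star. To recover $\Omega(n)$ I plan to take a disjoint union of $\Theta(n/\log n)$ widely-separated star gadgets, each contributing $\Theta(\log n)$ long edges, placed far enough apart that the $O(1/\mathrm{dist}^\alpha)$ decay of affectance makes inter-gadget interactions negligible; the lower bound then aggregates locally within each gadget to $\Omega(n)$ slots overall. Verifying the resulting MST structure and the uniform pairwise-infeasibility among the distinguished edges completes the argument.
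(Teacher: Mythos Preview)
Your first ingredient --- multiplying the two SINR constraints to cancel the unknown symmetric powers and obtain the geometric-mean condition $d_{\min}(\ell_1,\ell_2)\ge \beta^{1/\alpha}\sqrt{L_1L_2}$ --- is exactly the right leverage, and it is essentially what the paper does as well (phrased there as $a_{\ell_a}(\ell'_b)\cdot a_{\ell'_c}(\ell_d)>1$).

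The second ingredient, however, has two real gaps. First, the theorem concerns \emph{any} link set achieving connectivity, not the MST. You only argue that the $n$ distinguished MST edges are pairwise incompatible; an algorithm is free to connect the points with a completely different link set (for instance, chain the outer points to one another and use a single bridge to the cluster), and nothing in your geometry rules that out. Second, and more fatally, the multi-gadget step is self-defeating. You place the $\Theta(n/\log n)$ gadgets ``far enough apart that \ldots inter-gadget interactions [are] negligible'' --- but that is precisely the condition under which links from different gadgets \emph{can} share a slot. If each gadget individually forces $\Theta(\log n)$ slots and the gadgets do not interfere, the entire instance is schedulable in $\Theta(\log n)$ slots by running all gadgets in parallel. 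Lower bounds on schedule length take a max over independent pieces, not a sum; your ``aggregates to $\Omega(n)$'' step is simply wrong.

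The paper avoids both issues with a much simpler one-dimensional construction: $n$ points on a line with doubly-exponential spacing, $x_0=0$, $x_1=1$, $x_i=2x_{i-1}^2$ for $i\ge 2$. For \emph{any} two links on these points (not just MST edges), letting $x_m$ be the largest endpoint involved, one has $d_{\min}\le x_{m-1}$ while the product of the two link lengths is at least $1\cdot(x_m-x_{m-1})>x_{m-1}^2$, so your own geometric-mean condition is violated. Since every connecting structure uses at least $n-1$ links and no two can share a slot, $\Omega(n)$ follows immediately --- no MST analysis, no gadget decomposition.
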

\begin{proof}
Consider the pointset $x_0, \ldots, x_{n-1}$ given by
$x_0 = 0, x_1 = 1$ and for $i \ge 2$, $x_{i} = 2 x_{i-1}^2$.
Observe that $x_{m} - x_{m-1} > x_{m-1}^2$.
Consider any two pairs $\ell = \{x_i, x_j\}$ and $\ell'=\{x_k, x_m\}$,
and assume without loss of generality that $m \geq \max(i,j,k)$.
Thus, there must be indices $a, b, c, d \in \{1, 2\}$
such that $d_{\ell_a \ell'_b} \leq x_{m-1}$ and $d_{\ell'_c \ell_d} \leq x_{m-1}$.

On the other hand $\ell'_b \ge x_{m} - x_{m-1}$ and $\ell_d \geq 1$.

Then,
 \begin{align*}
a_{\ell_a}(\ell'_b) \cdot a_{\ell'_c}(\ell_d) = \frac{P_{\ell_a}/d_{\ell_a \ell'_b}^\alpha}{P_{\ell'_b}/(\ell'_b)^\alpha}
\cdot \frac{P_{\ell'_c}/d_{\ell'_c \ell_d}^\alpha}{P_{\ell_d}/(\ell_d)^\alpha} \\
 = \left( \frac{\ell_d \ell'_b}{d_{\ell'_c \ell_d} d_{\ell_a \ell'_b}} \right)^\alpha
 \ge  \left( \frac{x_m - x_{m-1}}{x^2_{m-1}}\right)^{\alpha}
 > 1\ .
 \end{align*}
Thus, any pair of links must be scheduled in different slots.
\end{proof}

In surprising contrast to the above strong lower bound in the symmetric model,
\begin{theorem}
In the asymmetric bi-directional model, any set of $n$ points
can be strongly connected in $O(\log n)$ slots.
\end{theorem}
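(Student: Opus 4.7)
The plan is to mirror the unidirectional algorithm of Theorem~\ref{mainth1} while exploiting the crucial freedom of the asymmetric model: each direction of a pair may carry its own power. Take $T = T(P)$ as the MST, now viewed as a collection of bi-directional pairs; for a set of pairs $L$, let $L^R$ denote the set of reverse-oriented links.

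The affectance proxy $f_\ell(\ell')$ is invariant under reversing either of its arguments, since it depends only on their lengths and on $d(\ell,\ell') = \min\{d_{\ell\ell'}, d_{\ell'\ell}\}$. Hence Lemma~\ref{lem:approx} carries over verbatim when $T$ is regarded as a set of pairs, and the averaging-plus-sparsification step of Lemma~\ref{lem:constantsizeapprox}, rerun with target constant $\gamma/2$ in place of $\gamma$, still produces from the current unscheduled subtree $T'$ a subset $L \subseteq T'$ of size $\Omega(|T'|)$ for which Eqn.~\ref{feasibilitycond} holds with constant $\gamma/2$. Orientation invariance of $f$ implies that the same inequality then holds for $L^R$.

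Applying Theorem~\ref{kessel} separately to $L$ and to $L^R$ yields two power assignments $\{P_{\ell_1}\}_{\ell \in L}$ and $\{P_{\ell_2}\}_{\ell \in L}$, each of which is unidirectionally feasible. A close look at Kesselheim's proof shows that the total affectance at every receiver is bounded by a constant multiple of the left-hand side of Eqn.~\ref{feasibilitycond}, so the halved constant $\gamma/2$ halves this sum. Running the two power assignments simultaneously in the same slot therefore gives, at each receiver of each direction,
\[
\sum_{\ell' \in L,\, \ell' \neq \ell} \bigl(a_{\ell'_1}(\ell_i) + a_{\ell'_2}(\ell_i)\bigr) \leq \tfrac12 + \tfrac12 = 1 \quad \text{for } i \in \{1,2\},
\]
which is exactly bi-directional feasibility for the pairs in $L$. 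The use of distinct $P_{\ell_1}, P_{\ell_2}$ is precisely the asymmetric-model freedom, and by Theorem~\ref{bidirlb} this freedom cannot be avoided.

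Since a constant fraction of the pairs is scheduled in each round, $O(\log n)$ slots suffice to handle all of $T$; strong connectivity is immediate because every MST edge is realized as a two-way link. The main obstacle is the white-box step: extracting from Kesselheim's proof the linear-in-$\gamma$ bound on achieved interference rather than invoking Theorem~\ref{kessel} purely as a feasibility oracle. Tracking the constant-factor losses through the tightened sparsification to confirm $|L| = \Omega(|T'|)$ is a secondary bookkeeping task.
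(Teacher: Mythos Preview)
Your combining step has a genuine gap. Running \textbf{Schedule} on $L$ bounds $\sum_{\ell'} a_{\ell'_1}(\ell_1)$, and running it on $L^R$ bounds $\sum_{\ell'} a_{\ell'_2}(\ell_2)$; but the bi-directional feasibility condition also requires bounding the cross terms $\sum_{\ell'} a_{\ell'_2}(\ell_1)$ and $\sum_{\ell'} a_{\ell'_1}(\ell_2)$, and neither run controls these. Concretely, $a_{\ell'_2}(\ell_1)$ compares the power $P_{\ell'_2}$ (produced by the $L^R$ run) against the reference power $P_{\ell_1}$ (produced by the $L$ run), at a receiver that is not a receiver in the $L^R$ instance at all. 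The two power assignments are computed independently from different distance data (Kesselheim's recursion uses $d(s_j,r_i)$, which changes when roles are swapped), so nothing ties $P_{\ell'_2}/P_{\ell_1}$ to any quantity either run has bounded. The ``$1/2+1/2$'' line therefore asserts exactly the inequality that remains unproven.

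The paper avoids this by \emph{not} splitting into two independent runs. Instead it treats each pair as the two directed links $\ell_1,\ell_2$ and applies \textbf{Schedule} once to the combined link set $\bigcup_\ell\{\ell_1,\ell_2\}$, observing that the pair condition (Eqn.~\ref{feasibilitycond2}) is exactly Eqn.~\ref{feasibilitycond} for this doubled set (modulo the harmless self-term $f_{\ell_1}(\ell_2)$, which can be set to zero since a pair does not interfere with itself). A single coherent power assignment then comes out of Kesselheim's algorithm, and its feasibility proof bounds all four interaction types at once. Your orientation-invariance observation about $f$ and the amenability of $T$ are correct and are used in the paper as well; the fix is to feed both orientations into one invocation of \textbf{Schedule} rather than two.
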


The argument in Section \ref{sec:conn} is as follows. First, we show that $T' \subseteq T$ is amenable. Then
we find a large subset for which Eqn \ref{feasibilitycond} holds, and finally, we schedule it in one slot.
The main difference in the bi-directional case is that we have to choose pairs in a feasible set, i.e., for any
pair $\ell$ that we want to connect, we have to include $\ell_1$ and $\ell_2$ in the \emph{same} slot.
Note that since, $\ell_1$ and $\ell_2$ have no effect on each other, we can define $f_{\ell_i}(\ell_k) = 0$ for
$i, k \in \{1, 2\}$. The new definition of amenability is thus,

\begin{equation}
\sum_{\ell' \in L, \ell' \geq \ell} \sum_{j = 1}^2 \sum_{k = 1}^2 f_{\ell_j}(\ell'_k) \leq \rho
\end{equation}
for a constant $\rho$.

We first need to verify that Lemma \ref{lem:approx} still holds with the new definition. This happens to be easy. Indeed, the proof of Lemma \ref{lem:approx} does not use the dichotomy between sender and receiver, and thus automatically holds (up to a factor of 4). It is easy to see that the argument in Lemma \ref{lem:constantsizeapprox} continues to hold with minor differences.

Finally, we need to show that Thm. \ref{kessel} still holds with the new definition, i.e., the algorithm \textbf{Schedule} can still successfully find
and schedule the link set thus selected. The algorithm \textbf{Schedule} is robust in relation this, as \cite{KesselheimSoda11} points out. 

More specifically, to show that \textbf{Schedule} works for the bi-directional variant, we need the following version of 
Thm. \ref{kessel}:

\begin{proposition}
Assume $L'$ is a set of pairs such that for all $\ell' \in L'$,
\begin{equation}
\sum_{\ell \in L', \ell \leq \ell'} \sum_{j=1}^2 \sum_{k=1}^2 f_{\ell_j}(\ell'_k) \leq \gamma
\label{feasibilitycond2}
\end{equation}
for a constant $\gamma = \frac{1}{4 \cdot 3^{\alpha} \cdot (4 \beta + 2)}$.
Then, $L'$ is feasible and there exists a polynomial time algorithm to find a power assignment to schedule $L'$ in a single slot.

Moreover, for any given set $L$ assume $S$ is the largest feasible subset of $L$. Then \textbf{Schedule} finds a $L''$ of size $\Omega(S)$ for which Eqn.~\ref{feasibilitycond2} holds.
\label{kessel2}
\end{proposition}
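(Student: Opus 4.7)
The plan is to reduce Proposition~\ref{kessel2} to the unidirectional Theorem~\ref{kessel} at a constant-factor cost, exploiting the convention that $f_{\ell_j}(\ell_k)=0$ for two directions of the same pair.

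For the feasibility statement, given a pair set $L'$ satisfying Eqn.~\ref{feasibilitycond2}, I would form the associated unidirectional set $L'_{dir}=\{\ell_j : \ell\in L',\, j\in\{1,2\}\}$, ordering the two directions of a common pair arbitrarily (immaterial since their mutual $f$ vanishes). For every direction $\ell'_k$ of a pair $\ell'\in L'$,
\[
\sum_{\tilde\ell\in L'_{dir},\,\tilde\ell\leq\ell'_k} f_{\tilde\ell}(\ell'_k)
= \sum_{\ell\in L',\,\ell\leq\ell'}\sum_{j\in\{1,2\}} f_{\ell_j}(\ell'_k)
\leq \sum_{\ell\in L',\,\ell\leq\ell'}\sum_{j,k'\in\{1,2\}} f_{\ell_j}(\ell'_{k'})
\leq \gamma.
\]
Thus $L'_{dir}$ satisfies the hypothesis of Theorem~\ref{kessel}, which produces a single-slot power assignment to each unidirectional link in $L'_{dir}$; this is exactly a pair of powers $(P_{\ell_1},P_{\ell_2})$ per $\ell\in L'$ certifying that $L'$ is feasible in the asymmetric bidirectional model.

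For the approximation statement, I would argue that Schedule, invoked on the pair set $L$ after replacing each atomic affectance $a_{\ell}(\ell')$ and its $f$-surrogate by the bidirectional 4-sum $\sum_{j,k}a_{\ell_j}(\ell'_k)$ and $\sum_{j,k}f_{\ell_j}(\ell'_k)$ respectively, produces a pair set $L''$ of size $\Omega(|S|)$ satisfying Eqn.~\ref{feasibilitycond2}. Concretely, each pair is viewed as an indivisible ``super-link'' ordered by its common length, and Schedule is run verbatim on this reduced instance. The analysis in \cite{KesselheimSoda11} treats affectance only as a non-negative, length-monotone, additive quantity decaying like $(\ell/d)^\alpha$; all three properties survive the 4-sum, and the sole quantitative effect is that every intermediate estimate inflates by a factor of at most $4$, degrading the approximation ratio by a constant.

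The main obstacle is verifying that Kesselheim's internal machinery -- in particular the greedy power assignment, the blue/red link decomposition, and the associated weighted averaging -- continues to yield the same guarantee when affectance is read as the 4-sum. I expect this to be mechanical: since Theorem~\ref{kessel} is the sole source of quantitative bounds and is applied black-box to the pair-level affectance, the conclusion is Eqn.~\ref{feasibilitycond2} with $\gamma$ replaced by $\gamma/4$ (absorbed into the same universal constant in the statement). Combined with the bidirectional versions of Lemmas~\ref{lem:approx} and \ref{lem:constantsizeapprox} noted earlier in the section, this delivers the desired $O(\log n)$ schedule for the asymmetric bidirectional model.
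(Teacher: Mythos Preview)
Your approach is essentially the same as the paper's: for feasibility you pass to the unidirectional link set $L'_{dir}=\bigcup_\ell\{\ell_1,\ell_2\}$ and observe that Eqn.~\ref{feasibilitycond2} implies Eqn.~\ref{feasibilitycond} on $L'_{dir}$ (the only discrepancy being the self-terms $f_{\ell_j}(\ell_k)$, which are zero by convention), then invoke Theorem~\ref{kessel}; for the approximation guarantee you argue that Kesselheim's analysis survives the replacement of link-level affectance by the pair-level $4$-sum at a constant-factor cost. If anything, your write-up is more explicit than the paper's own justification, which disposes of the second part in a single sentence referring back to the bidirectional analogues of Lemmas~\ref{lem:approx} and~\ref{lem:constantsizeapprox}.
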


The last part about finding a large feasible subset is an implication of arguments like Lemmas \ref{lem:approx} and \ref{lem:constantsizeapprox}, which we already verified to be sound in the new regime.

For the first part of the algorithm, Eqn. \ref{feasibilitycond2} is
identical to Eqn. \ref{feasibilitycond} if we assume the link set to
be $L' = \cup_{\ell} \{\ell_1, \ell_2\}$ except one caveat. For a
given $\ell$, Eqn. \ref{feasibilitycond} includes the term
$f_{\ell_k}(\ell_j)$ for the two links of the same pair, and
Eqn. \ref{feasibilitycond2} doesn't (or rather $f_{\ell_k}(\ell_j)$ is
set to zero). However, this is not a problem, since we assume that
$\ell_1$ and $\ell_2$ do not interfere with each other. In relation to
all other pairs, Eqn. \ref{feasibilitycond2} is identical to
Eqn. \ref{feasibilitycond} and thus the argument is identical.

\section{Oblivious Power Assignments}
\label{sec:oblivious}

We examine here the complexity of connectivity when using simple power
assignments. A power assignment is said
to be \emph{oblivious} if it depends only on the length of the link.
We show that any reasonable oblivious power assignment is ineffective
in that it requires $\Omega(n)$ slots to connect some instance of $n$ points.
On the other hand, we also find that if the diversity of the edge lengths in the MST is
small, then they can be quite effective. 


Moscibroda and Wattenhofer \cite{MoWa06} showed that for both uniform power (all links use the same power)
and linear power ($P_{\ell} = \ell^{\alpha}$), there are pointsets for which connectivity requires
$\Omega(n)$ slots.  It is easy to verify that their construction
applies also to functions that grow slower than uniform (i.e., are decreasing)
or faster than linear.
We address here essentially all other reasonable
oblivious assignments, namely those that are monotone increasing but
grow slower than linear.  

We call a power function $p$ 
\emph{smooth} if $p(x) \geq x$ for all $x$, $p(x) \le p(y)$ when $x \le y$, and $p(x) \le x^\alpha$,
and $g:\mathbb{R}^+ \rightarrow
\mathbb{R}^+$  defined by $g(x) = \frac{1}{2}\min(p(x),
x^{\alpha}/p(x))$ is monotone
increasing and $\omega(1)$. This is true for mean power ($p(x) = x^{\alpha/2}$) and many similar power assignments (such as the
one used in \cite{DBLP:conf/wdag/KowalskiR10}, $p(x) = x^{(\alpha+2)/2}$).

\begin{lemma}
Let $Y = \{y_1, y_2 \ldots y_n\}$ be a set of points on the line such that $y_1 < y_2 < \ldots < y_n$, 
the minimum distance between any pair of points is $1$, and 
$g(y_t - y_{t-1}) \geq  (y_{t-1} - y_1)^\alpha$, for each $t=3, 4, \ldots, n$. 
Then, no two links between points in $Y$ can be scheduled simultaneously using power assignment $p$.
\label{lem:incr1}
\end{lemma}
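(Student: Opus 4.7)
The plan is to show that for any two distinct links $\ell,\ell'$ with endpoints in $Y$, the uncapped affectance in at least one direction exceeds $1$, which already violates the SINR constraint and rules out joint scheduling. I first dispose of the case where $\ell$ and $\ell'$ share a node: if the shared node plays different roles in the two links then some $d_{\ell\ell'}$ or $d_{\ell'\ell}$ is $0$ and the affectance is infinite, while if it plays the same role the two links share a sender or a receiver and a one-line calculation using monotonicity of $p$ gives $a_{\ell'}(\ell)\ge \beta\ge 1$. So I may assume the four endpoints are distinct.

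Let $m$ be the largest index appearing among those four endpoints; by relabeling, assume $\ell'$ contains $y_m$ while both endpoints of $\ell$ lie in $\{y_1,\dots,y_{m-1}\}$. Then $1\le \ell\le y_{m-1}-y_1$ and $\ell'\ge y_m-y_{m-1}$, and smoothness gives $1\le p(\ell)\le \ell^\alpha$. Monotonicity of $g$ combined with the hypothesis at index $m$ lifts to $g(\ell')\ge g(y_m-y_{m-1})\ge (y_{m-1}-y_1)^\alpha$, which unpacks into the two simultaneous lower bounds
\[
p(\ell')\ \ge\ 2(y_{m-1}-y_1)^\alpha \qquad\text{and}\qquad \frac{\ell'^{\,\alpha}}{p(\ell')}\ \ge\ 2(y_{m-1}-y_1)^\alpha.
\]

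These two halves of the $\min$ in the definition of $g$ are tailored to handle the two orientations of $\ell'$. If $y_m$ is the receiver of $\ell'$, then the sender of $\ell'$ lies in $\{y_1,\dots,y_{m-1}\}$, so $d_{\ell'\ell}\le y_{m-1}-y_1$; substituting $p(\ell)\le \ell^\alpha$ into $a_{\ell'}(\ell)=\beta\,p(\ell')\ell^\alpha/\bigl(p(\ell)\,d_{\ell'\ell}^\alpha\bigr)$ and applying the first bound yields $a_{\ell'}(\ell)\ge 2\beta\ge 2$. If instead $y_m$ is the sender of $\ell'$, then the receiver of $\ell'$ lies in $\{y_1,\dots,y_{m-1}\}$, so $d_{\ell\ell'}\le y_{m-1}-y_1$; substituting $p(\ell)\ge 1$ into $a_\ell(\ell')=\beta\,p(\ell)\ell'^{\,\alpha}/\bigl(p(\ell')\,d_{\ell\ell'}^\alpha\bigr)$ and applying the second bound yields $a_\ell(\ell')\ge 2\beta\ge 2$. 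In either case the SINR inequality for one of the two links fails, so they cannot share a slot.

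The main obstacle is conceptual rather than computational: recognizing that the definition $g(x)=\tfrac12\min\bigl(p(x),x^\alpha/p(x)\bigr)$ is engineered precisely so that a single hypothesis forces both a ``strong interferer'' lower bound on $p(\ell')$ and a ``weak signal'' lower bound on $\ell'^{\,\alpha}/p(\ell')$; between them these cover the two orientations of the long link, and the individual affectance estimates are then a few lines of algebra. No separate use of the bound $y_m-y_1\le 2(y_m-y_{m-1})$ (which follows from $g(x)\le x^{\alpha/2}/2$ via AM--GM) is needed, though it is available to tighten constants.
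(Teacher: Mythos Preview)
Your proof is correct and follows essentially the same route as the paper: pick the link containing the point of largest index, split on whether that point is sender or receiver, and in each case bound the relevant affectance using one half of the $\min$ in the definition of $g$ together with $p(\ell)\le \ell^\alpha$ or $p(\ell)\ge 1$. The only cosmetic differences are that you extract both inequalities $p(\ell')\ge 2(y_{m-1}-y_1)^\alpha$ and $\ell'^{\,\alpha}/p(\ell')\ge 2(y_{m-1}-y_1)^\alpha$ up front (via monotonicity of $g$), carry the factor $\beta$ explicitly, and dispose of the shared-endpoint case separately; the paper handles these inline and simply notes that a point cannot participate in two simultaneous transmissions when $\beta>1$.
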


\begin{proof}
Consider two links $\ell_t=(y_t,y_k)$ and $\ell_1=(y_i, y_j)$, where 
without loss of generality $t \ge \max(k,i,j)$ and 
$y_i$ is the sender of $\ell_1$.  
We may assume that $j \neq t$ and $i \neq t$, since a point
cannot be involved in two transmissions simultaneously, if the signal
requirement $\beta >
1$.  The power is $P_t = p(y_t - y_k)$ on link $\ell_t$ and $P_1 = p(|y_i
- y_j|)$ on link $\ell_1$.

First, consider the case where $y_t$ is the receiver of $\ell_t = (y_k,
y_t)$, $k < t$. 
The affectance of $\ell_t$ on $\ell_1$ is
\begin{align*}
 a_{t}(1) & = \frac{p(y_t - y_{k})}{|y_k - y_j|^{\alpha}}
   \frac{|y_i - y_j|^{\alpha}}{ p(|y_i - y_j|)}
 \overset{1}{\geq} \frac{p(y_t - y_{t-1})}{|y_k - y_j|^{\alpha}} \\
& \overset{2}{\geq} \frac{p(y_t - y_{t-1})}{g(y_t - y_{t-1})} 
 \overset{3}{>} 1\ .
\end{align*}
Explanations:
\begin{enumerate}
\item By sublinearity, $p(|y_i-y_j|) \le |y_i - y_j|^\alpha$, 
  and by monotonicity, $p(y_t - y_k) \ge p(y_t - y_{t-1})$.
\item Because $|y_k - y_j|^\alpha \le (y_{t-1} - y_1)^\alpha \le g(y_t - y_{t-1})$.
\item Since $g(x) < p(x)$.
\end{enumerate}
Thus these two links cannot be scheduled together.

Second, consider the case where $y_t$ is the sender of $\ell_t = (y_t,
y_k)$, $k < t$.  Let $z$ denote $y_t - y_k$.
The affectance of $\ell_1$ on $\ell_t$ is now
\begin{align*}
a_{1}(t) & = \frac{p(|y_i - y_j|)}{|y_i - y_k|^{\alpha}}
     \frac{z^{\alpha}}{p(z)}
\overset{1}{\geq}  \frac{z^{\alpha}}{|y_i - y_k|^\alpha p(z)} \\
& \overset{2}{\geq} \frac{z^{\alpha}}{g(z) p(z)}
\overset{3}{>} 1 \ .
\end{align*}
Explanations:
\begin{enumerate}
\item Because $p(|y_i - y_j|) \ge 1$, since $|y_i - y_j| \ge 1$.
\item Since $|y_i - y_k|^\alpha \le (y_{t-1} - y_1)^\alpha \le g(z)$.
\item Since $g(z) < z^\alpha/p(z)$, by assumption.   
\end{enumerate}
\vspace{-2em}
\end{proof}

We shall argue the lower bound for a more general class of structures
(similar to \cite{MoWa06}).
We say that a structure (set of links) on a pointset has property
$\phi_{min}$ if each point is either a sender or receiver on at least
one link.

Since $g$ is monotone increasing and eventually infinite, it has an
inverse $g^{-1}$.
We construct $n$ points $x_1, x_2 \ldots x_n$ on the line defined by
$x_1 = 0$, $x_2 = 2$ and $x_i = x_{i-1} + g^{-1}(2(x_{i-1})^\alpha)$.
The following result is now immediate from Lemma \ref{lem:incr1}.

\begin{theorem}
For any structure with property $\phi_{min}$
and any smooth oblivious power assignment,
there is an instance that requires $n/2$ slots.
\end{theorem}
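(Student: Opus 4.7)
The plan is to combine the explicit construction with Lemma~\ref{lem:incr1} and a simple counting argument. First I would verify that the recursively defined pointset $x_1, x_2, \ldots, x_n$ satisfies the hypotheses of Lemma~\ref{lem:incr1}. By construction, $g(x_i - x_{i-1}) = 2 x_{i-1}^{\alpha} \ge (x_{i-1} - x_1)^{\alpha}$ for all $i \ge 3$, which is exactly the spacing condition required by the lemma. For the minimum-distance condition, $x_2 - x_1 = 2 \ge 1$, and for $i \ge 3$ monotonicity of $g$ together with $g(1) \le 1/2$ (since $p(1) \ge 1$ forces $\min(p(1), 1/p(1)) \le 1$) gives $x_i - x_{i-1} = g^{-1}(2 x_{i-1}^{\alpha}) \ge 1$. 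Since the points lie on the line in increasing order, any pair is at distance at least $1$.

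Next, because $g$ is monotone increasing and $\omega(1)$, the inverse $g^{-1}$ is well-defined on a cofinal interval, so the recurrence makes sense for arbitrary $n$, and Lemma~\ref{lem:incr1} applies to every pair of links whose endpoints lie in $\{x_1,\ldots,x_n\}$. Consequently, \emph{no two} such links can be scheduled in the same slot under the oblivious power assignment $p$.

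Now I would invoke property $\phi_{min}$: every point is the sender or receiver of at least one link in the structure. Summing over the $n$ points gives total endpoint-incidences at least $n$, and since each link contributes exactly $2$ to this sum, the structure contains at least $n/2$ distinct links. Combined with the previous paragraph, each link demands its own slot, yielding a lower bound of $n/2$ slots.

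The only mildly delicate step is checking that the recurrence is well-posed and that the consecutive gaps are at least $1$ (so that the hypotheses of Lemma~\ref{lem:incr1} genuinely hold); everything else is routine. I do not anticipate a substantive obstacle, since Lemma~\ref{lem:incr1} already does the heavy interference calculation, and the $\phi_{min}$ counting step is elementary.
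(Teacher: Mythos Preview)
Your proposal is correct and follows exactly the approach the paper takes: the paper declares the theorem ``immediate from Lemma~\ref{lem:incr1}'' after defining the recursive pointset, and you have simply supplied the routine verifications (spacing condition, minimum gap, and the $\phi_{min}$ counting argument) that the paper leaves implicit.
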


In spite of this highly negative statement, we do find that oblivious
power assignments are quite effective given some natural assumptions
about edge length distributions. 

\subsubsection*{Upper bounds}

Let $T$ be a MST of the given pointset.  Let $\Delta$ denote the ratio
between longest to shortest edge length in $T$.  Assume, by scaling,
that $\ell \geq 1$ for all $\ell \in L$. Let $g(L) = |\{m: 
\exists \ell \in L, \lceil \lg \ell \rceil = m \}|$ denote the
\emph{length diversity} of the link set $L$, or the number of length
groups. Note that $g(L) \le \log \Delta$.

\begin{theorem}
Any pointset can be strongly connected in $O(g(L))$
slots using uniform (or linear) power assignment.
This is achieved on an orientation of the minimum spanning tree.
\label{logdelta}
\end{theorem}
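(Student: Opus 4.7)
The plan is to partition the MST edges into $g(L)$ length classes and schedule each class in $O(1)$ slots under uniform (equivalently, linear) power. Let $T$ be the undirected MST of $P$, and after rescaling assume $|e| \ge 1$ for every $e \in T$. For each integer $i \ge 0$, set $L_i = \{e \in T : \lceil \lg |e|\rceil = i\}$; by definition, the number of nonempty $L_i$ equals $g(L)$. To obtain strong connectivity, I would schedule each $L_i$ oriented toward an arbitrary root, and then schedule each $L_i$ again with the reverse orientation, for a total of $O(g(L))$ slots.

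Within $L_i$ every edge has length in $[2^{i-1}, 2^i)$, so any two of its links have lengths agreeing up to a factor of two. Consequently, uniform power $P \equiv 1$ and linear power $P_\ell = \ell^\alpha$ produce SINR values that differ by at most a factor of $2^{O(\alpha)}$, which can be absorbed into the threshold $\beta$. Thus it suffices to establish that each $L_i$ is \amenable\ in the sense of Section~\ref{sec:conn}, where the length-based quantity $f_\ell(\ell')$ now correctly reflects (up to constant factors) the true affectance under uniform/linear power on the sub-instance.

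For amenability of a single class, the argument mirrors Lemma~\ref{lem:approx}. After rescaling so that every $\ell \in L_i$ has $|\ell| \in [1/2, 1]$, the MST swap argument of Lemma~\ref{pointsincircle} gives a packing bound: any disc of constant radius contains only $O(1)$ endpoints of edges in $L_i$, since two close endpoints of class-$i$ edges would enable a cheaper spanning tree. Covering the plane around any probe link by the concentric annuli $C_t \setminus C_{t-1}$, the $t$-th annulus contributes $O(t)$ endpoints, each contributing $O(t^{-\alpha})$ to the affectance sum. The tail series $\sum_{t\ge 1} t^{1-\alpha}$ converges for $\alpha>2$, so the total is $O(1)$. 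Lemma~\ref{lem:constantsizeapprox} then yields an $O(1)$-slot schedulable partition of $L_i$, and the equivalence of powers above certifies it under uniform/linear power.

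The main obstacle is precisely the interface between the oblivious-power regime demanded by the statement and the power-controlled feasibility machinery of Theorem~\ref{kessel}. Granting full power control, the whole MST is amenable (Lemma~\ref{lem:approx}); the lower bound of the earlier part of this section shows this can fail badly when one is confined to oblivious power on the entire MST. What rescues the argument is the restriction to a single length class: on each $L_i$ the links are (almost) equilength, so oblivious uniform and linear assignments agree, up to constants in SINR, with any Kesselheim-style power choice. Once this reduction is made, the packing and affectance machinery already in place transfers essentially verbatim, yielding $O(1)$ slots per class and $O(g(L))$ slots in total.
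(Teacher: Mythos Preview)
Your length-class decomposition is the same as the paper's. The divergence is in how each class $L_i$ is certified to be $O(1)$-slot schedulable under \emph{uniform} (or linear) power. The paper does not go through amenability or Theorem~\ref{kessel} at all: it uses the packing bound of Lemma~\ref{pointsincircle} to show that only $O(t^2)$ endpoints of $L_i$ lie within distance $td$ of any point (where $d$ is the minimum length in the class), greedily colors the class with $O(1)$ colors so that senders of the same color are at mutual distance at least $td$ for a suitably chosen constant $t$, and then invokes a standard separation lemma (Lemma~3.1 of \cite{us:esa09full}) stating that nearly equilength links whose senders are that well separated are SINR-feasible under uniform power.

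There is a genuine gap in your route. You establish amenability of $L_i$ and then invoke Lemma~\ref{lem:constantsizeapprox} and Theorem~\ref{kessel}, but those only deliver a subset feasible under the \emph{Kesselheim} power assignment of \textbf{Schedule}, which is defined recursively and is not oblivious. Your assertion that on nearly equilength links this assignment ``agree[s], up to constants in SINR, with'' uniform power is not justified; nothing in the proof of Theorem~\ref{kessel} forces the recursively defined powers $P_{\ell_i}=4\beta\sum_{j>i}P_{\ell_j}\ell_i^\alpha/d(s_j,r_i)^\alpha$ to lie within a bounded ratio of one another, even when all link lengths are equal. The argument can be repaired without invoking Kesselheim at all: on an equilength class the uniform-power affectance $a_\ell(\ell')=(\ell'/d_{\ell\ell'})^\alpha$ is dominated by $2^\alpha f_\ell(\ell')$, so Eqn.~\ref{feasibilitycond} controls the in-affectance on $\ell'$ from shorter links, while amenability applied with $\ell=\ell'$ controls the in-affectance from longer links since $f_{\ell'}(\ell'')\ge a_{\ell''}(\ell')$; one further constant-factor sparsification then brings the total below $1/\beta$. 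That is effectively what the paper's direct coloring step accomplishes in one stroke.
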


\begin{proof}
Divide the links of the spanning tree into at most $g(L)$ length
classes, where links in the same class differ in length by a factor at
most 2. Consider one such color class $S$.
Let $P$ be the endpoints of links in $S$ and let $d$ be length of
the shortest link.
Consider an endpoint $x$ of a link in $S$.
By Lemma \ref{pointsincircle}, at most 9 points from $P$ are within a
distance $d/4$ from any point. Note that any radius-$r$ circle can be
covered with at most $2(r/s)^2$ radius-$s$ circles.
Thus, for any $t \ge 1$, there are at most $C = 4 \cdot 2(4t)^2$ points
from $P$ within a distance $td$ from $x$.
The links in $S$ can then colored with $C$ colors so that senders of any pair of
links are of distance at least $td$. 
If $t = 4(\alpha 4^2 \tau(\alpha-1))^{1/\alpha}$, where $\tau$ is the
Riemann function and $\alpha > 2$, then it follows from Lemma 3.1 of
\cite{us:esa09full} that each colorset forms a feasible set using
uniform power.
The total number of slots used is then $t g(L)$.
\end{proof}

This bound improves on a bound of $O(g(L)\log n)$ given by Moscibroda
and Wattenhofer \cite{MoWa06}. The construction in \cite{MoWa06} shows
also that the bound is best possible for uniform and linear power.

Exponentially weaker dependence on $\Delta$ can be achieved by using
mean power (the power is set proportional to the length to the power of $\alpha/2$).

\begin{theorem}
Any pointset can be strongly connected in $O(\log n(\log\log \Delta +
\log n))$ slots using mean power assignment.
\label{loglogdelta}
\end{theorem}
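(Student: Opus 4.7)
The plan is to combine Theorem~\ref{mainth1} with the oblivious‑power scheduling machinery of \cite{SODA11}. The rough idea is that the $O(\log n)$ power‑control schedule of the MST produced by \textbf{Connect} can be converted, slot by slot, into a mean‑power schedule at the cost of an $O(\log\log\Delta + \log n)$ multiplicative blowup in the number of slots.

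First I would apply Theorem~\ref{mainth1} to an orientation of $T(P)$ directed toward an arbitrary root, and then separately to the reverse orientation. This yields two partitions of the edges of $T(P)$ into $O(\log n)$ link sets $S_1, S_2, \ldots, S_k$ in total, each of which is SINR‑feasible under the power assignment produced by algorithm \textbf{Schedule}. Together they realize a strongly connected communication structure on $P$, so it is enough to reschedule each $S_i$ under mean power at bounded additional cost.

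Second, I would invoke the relevant result from \cite{SODA11}: any link set that is feasible under some (arbitrary) power assignment can be partitioned into $O(\log\log\Delta + \log n)$ subsets, each of which is feasible when every link uses mean power $p(\ell)=\ell^{\alpha/2}$. Applying this to each $S_i$ and concatenating the resulting sub‑schedules gives a complete schedule of $T(P)$ with mean power using
\[
O(\log n)\cdot O(\log\log\Delta + \log n) \;=\; O(\log n(\log\log\Delta + \log n))
\]
slots, which is the claimed bound.

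The main obstacle is to verify that the quoted result of \cite{SODA11} is available in exactly the multiplicative form needed here, and that its dependence on $\Delta$ matches the MST length‑ratio $\Delta$ used in this section. Since each $S_i$ consists only of MST edges, whose lengths lie within a factor $\Delta$ of one another, the parameters line up. Given that, the theorem follows by a direct composition of the two prior scheduling statements, with no further geometric or SINR analysis required.
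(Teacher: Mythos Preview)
Your proposal is close in spirit to the paper's argument but organizes the composition differently, and that difference matters for the form of the \cite{SODA11} result you need. The paper does \emph{not} first run \textbf{Connect} and then convert each power-control slot to mean power. Instead it uses only the amenability of the MST (Lemma~\ref{lem:approx}) to conclude that every remaining subset $T'\subseteq T$ has power-control capacity $\Omega(|T'|)$; then the capacity-ratio result of \cite{SODA11} gives mean-power capacity $\Omega(|T'|/k)$ with $k=O(\log\log\Delta+\log n)$, and the constant-factor mean-power capacity algorithm from \cite{SODA11} schedules that many links in a single slot. Iterating this directly on the shrinking residual of $T$ yields the $O(k\log n)$ bound.

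Your route instead needs the \emph{partition} statement ``any PC-feasible set splits into $O(k)$ mean-power feasible sets,'' which is strictly stronger than the capacity-ratio statement the paper quotes. You correctly flag this as the main obstacle. If you only have the capacity ratio plus a constant-factor mean-power algorithm, then converting a single PC-feasible slot $S_i$ by repeatedly peeling off an $\Omega(1/k)$ fraction takes $O(k\log|S_i|)$ mean-power slots, and summing over the $O(\log n)$ slots from \textbf{Connect} yields $O(k\log^2 n)$ rather than $O(k\log n)$. The paper's direct iteration on the MST sidesteps this by never materializing the intermediate PC schedule: the $\log n$ coming from the geometric shrinking and the $\log n$ that would come from scheduling each $S_i$ are the \emph{same} $\log n$, not multiplied. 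So either verify that \cite{SODA11} actually provides the scheduling/partition bound $\chi_{\text{mean}}\le O(k)\cdot\chi_{\text{PC}}$, or restructure as the paper does and iterate the mean-power algorithm on the MST itself.
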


\begin{proof}
The capacity of a linkset is the maximum number of links that can be
scheduled simultaneously.  Our main result is that any orientation of
the MST $T$ yields a directed linkset with linear capacity: $\Omega(n)$ 
links can be scheduled in a single slot. 
%
A recent result \cite{SODA11} shows that for any linkset, the
optimal capacity with power control differs from optimal capacity with
mean power by a factor of $O(\log\log \Delta + \log n)$.
Further, a constant approximation algorithm for mean power capacity is
given in \cite{SODA11}. That algorithm then schedules $\Omega(n/(\log
\log \Delta + \log n))$ links from $T$ in a single slot. 
In $O(\log n (\log\log \Delta + \log n))$ slots it will then have
scheduled all of $T$.
\end{proof}

Note that the construction of Lemma \ref{lem:incr1} yields a lower
bound of $\Omega(\lg\lg \Delta)$ for mean power.

\section{Extensions to other connectivity problems}
\label{sec:design}

\subsection{Minimum-latency aggregation scheduling}

Recall the problem definition.  An \emph{in-arborescence} $T$ is a
directed rooted tree that has a path from every node to the root. An
edge $e$ in $T$ is said to be a \emph{descendant} of edge $e'$ if
there is a directed path starting with $e$ that includes $e'$.
Given a set of $n$ points $P$ on the plane, 
the MLAS problem is to
find $t$ ordered disjoint linksets $S_1, S_2, \ldots S_t$  such that
each $S_i$ is feasible, the links in $T = \cup_i S_i$ form a spanning in-arborescence $T$, and whenever $e \in S_i$ is a descendant of $e' \in S_j$ then $i < j$. Let us call this last condition the 
\emph{ordering requirement}.


Consider the following iterative algorithm. 
Let $P_1 = P$.
In step $i$ the algorithm finds a feasible linkset $S_i$ on $P_i$ and derives a new pointset $P_{i+1}$, repeating the process until $P_{i+1}$ contains only a single node.
Given $P_i$, we form the nearest-neighbor forest $F_i$, where each node $p \in P_i$ provides the link $(p, p')$ to its nearest point $p'$; whenever links
whenever $F_i$ contains a pair $(p, p')$ and $(p',p)$, we remove one of the two links. This forest $F_i$ is a subset of some minimum spanning tree of $P_i$, and therefore it is amenable by Lemma \ref{lem:approx}. 
Thus, we can find a feasible set $S_i \subseteq F_i$ with $|S_i| = \Omega(|P_i|)$
using \textbf{Schedule}.
This set $S_i$ is necessarily a (partial) matching on $P_i$.
We form $P_{i+1}$ by removing from $P_i$ the tails of all the links in $S_i$.

We first show that this algorithm uses $O(\log n)$ steps, which follows immediately from the following Lemma.

\begin{lemma}
$|P_{i+1}| \leq c_3 |P_{i}|$, for some $c_3 < 1$.
\end{lemma}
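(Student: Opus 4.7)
The plan is to bound $|S_i|$ from below by a constant fraction of $|P_i|$ and then use the fact that distinct tails of $S_i$ are removed to produce $P_{i+1}$. The key links in the chain are (a) the nearest-neighbor forest has linearly many edges, (b) it inherits amenability from the MST, so Schedule extracts a linearly large feasible subset, and (c) this subset is a matching, so its tails are distinct and genuinely shrink the pointset.

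First I would bound $|F_i|$. Each point $p\in P_i$ contributes exactly one outgoing link (to its nearest neighbor) before the mutual-pair deletion step, so initially $F_i$ has $|P_i|$ directed links. The deletion step removes at most one link per mutual pair $\{(p,p'),(p',p)\}$, i.e., at most $|P_i|/2$ links in total, yielding $|F_i|\ge |P_i|/2$. Next, $F_i$ is contained in some minimum spanning tree of $P_i$ (a standard fact about nearest-neighbor graphs), so by Lemma~\ref{lem:approx} it is amenable. Applying Lemma~\ref{lem:constantsizeapprox} and Theorem~\ref{kessel}, Schedule returns a feasible $S_i\subseteq F_i$ with $|S_i|\ge c_0\,|F_i|\ge (c_0/2)\,|P_i|$ for an absolute constant $c_0>0$.

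Now I would argue that $S_i$ is a (partial) matching on $P_i$, so its tails are distinct. Each point of $P_i$ appears as a tail of at most one link of $F_i$ (the tail structure of a nearest-neighbor forest assigns at most one outgoing link per vertex), so tails of $S_i\subseteq F_i$ are automatically distinct. Heads are distinct as well: since $\beta\ge 1$, no node can simultaneously be the receiver of two successful transmissions in any SINR-feasible set, so at most one link in $S_i$ has a given point as head. Hence $S_i$ is a matching.

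Finally, since $P_{i+1}$ is obtained by deleting the tails of the links of $S_i$ from $P_i$ and these tails are $|S_i|$ distinct points,
\begin{equation*}
|P_{i+1}| \;=\; |P_i| - |S_i| \;\le\; \left(1 - \tfrac{c_0}{2}\right)|P_i|,
\end{equation*}
so we may take $c_3 = 1 - c_0/2 < 1$. The only nontrivial obstacle is checking that the matching claim holds even for the specific linkset returned by Schedule; this is handled by the out-degree bound inherited from the nearest-neighbor forest structure, together with SINR-feasibility ruling out coincident heads, so no further analysis of Schedule itself is needed.
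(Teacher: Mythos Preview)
Your proof is correct and follows essentially the same approach as the paper: bound $|F_i|\ge |P_i|/2$, invoke amenability of $F_i$ (as a subset of an MST) so that \textbf{Schedule} returns a constant fraction, and subtract the distinct tails. Your version is simply more explicit---you justify the $|P_i|/2$ bound, spell out why tails are distinct, and arrive at the tighter constant $1-c_0/2$ (the paper writes $1-c_4$, a minor slip given that $c_4$ relates $|S_i|$ to $|F_i|$ rather than to $|P_i|$).
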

\begin{proof}
The forest $F_i$ contains at least $|P_i|/2$ edges.
By Theorem \ref{kessel}, \textbf{Schedule} finds a feasible matching $S_i$ of size at least $c_4 |F_i|$, for some $c_4 > 0$.
Then, $|P_{i+1}| = |P_i| - |S_i| \le (1 - c_4)|P_i$.
\end{proof}

We also need to show that the resulting link set forms an in-arborescence and that it satisfies the ordering requirement.
Both of these are easily verified.

Also, it can be easily verified that any aggregation tree satisfying the ordering requirement requires a schedule of length at least $\lg n$. 

Thus we get the following result.
\begin{theorem}
Given any set of $n$ points on the plane a aggregation tree can be formed with $O(\log n)$ latency, and this is optimal.
\end{theorem}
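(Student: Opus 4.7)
The plan is to combine the iterative algorithm already described with two structural verifications and a matching-based lower bound. The iteration count is immediate: the preceding lemma gives $|P_{i+1}| \le c_3 |P_i|$ with $c_3 < 1$, so after $O(\log n)$ rounds the process terminates at a single remaining node $r$, which serves as the root. Each round contributes one feasible slot (guaranteed by \textbf{Schedule}), so the schedule length is $O(\log n)$.

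Next, I would verify that $T = \bigcup_i S_i$ is an in-arborescence rooted at $r$ satisfying the ordering requirement. Since each $S_i$ is a partial matching and only the tails of links in $S_i$ are removed to form $P_{i+1}$, each non-root node acquires exactly one outgoing link in $T$, added precisely in the round that removes it. Acyclicity follows from round-monotonicity: if $(p, p') \in S_i$ then $p' \in P_{i+1}$, so $p'$'s outgoing link (if any) lies in $S_j$ for some $j > i$; a putative cycle would therefore yield a strictly increasing chain of round indices returning to its start, a contradiction. Connectivity to $r$ follows because iterated outgoing edges strictly increase the round index, and the chain must terminate at the unique node without an outgoing link, namely $r$. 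The same monotonicity along a descendant path gives the ordering requirement: if $e \in S_i$ is a descendant of $e' \in S_j$, the chain of intermediate links has strictly increasing round indices, so $j > i$.

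For optimality, I would prove an $\Omega(\log n)$ lower bound by induction on subtree size. Given any valid schedule of length $t$ and aggregation tree $T$, let $f(v)$ denote the slot of $v$'s outgoing link for non-root $v$, and set $f(r) = t+1$. Because a node cannot participate in two transmissions in the same slot (the half-duplex constraint implicit in the matching property of each slot), the children of any node $v$ use pairwise distinct $f$-values, and the ordering requirement forces these values to lie in $\{1, \ldots, f(v) - 1\}$. Induction on $f(v)$ then yields $|T_v| \le 2^{f(v) - 1}$, since the children's subtree sizes sum to at most $\sum_{j=1}^{f(v)-1} 2^{j-1} = 2^{f(v)-1} - 1$ and $v$ itself contributes one more. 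Applied at $r$, this gives $n \le 2^t$, so $t \ge \log_2 n$.

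The main obstacles are modest. The structural verifications hinge on the single observation of round-monotonicity along directed paths in $T$, and the lower bound relies on choosing $2^{f(v)-1}$ as the correct potential, after which the induction is transparent. Both the amenability of $F_i$ (used to invoke \textbf{Schedule} via Lemma \ref{lem:approx}) and the matching character of $S_i$ have already been established in the preceding material, so the present theorem reduces to these bookkeeping arguments.
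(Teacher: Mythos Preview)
Your proposal is correct and follows essentially the same approach as the paper: use the preceding lemma to get the $O(\log n)$ iteration count, verify the in-arborescence structure and ordering requirement (which the paper simply declares ``easily verified''), and establish the $\lg n$ lower bound (also left as ``easily verified'' in the paper). Your round-monotonicity argument for the structural verifications and the $|T_v|\le 2^{f(v)-1}$ induction for optimality are exactly the natural ways to unpack these claims.

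One small imprecision: in the lower bound, the reason the children of $v$ occupy pairwise distinct slots must hold for \emph{any} valid aggregation schedule, not just the algorithm's output, so you cannot appeal to ``the matching property of each slot'' (that is a property of the algorithm). The correct justification is that in any SINR-feasible set with $\beta\ge 1$, a single point cannot be the receiver of two simultaneous links (the two SINR inequalities at that receiver would force $\beta^2\le 1$ with equality only in the noiseless case). With that substitution your argument goes through unchanged.
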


\subsection{Biconnectivity and $k$-edge connectivity}

We can use our basic connectivity method to achieve additional network
design criteria. 
As a warmup, we first show how to achieve biconnectivity at minimal extra cost.
A graph is biconnected if there are at least two vertex-disjoint paths between
any pair of vertices.
\begin{theorem}
Let $P$ be any set of points on the Euclidean plane. Then $P$ can be
strongly biconnected in $O(\log n)$ slots.
\label{mainth3}
\end{theorem}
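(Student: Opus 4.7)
\medskip
\noindent
\textbf{Proof plan.} The plan is to schedule the MST $T$ from Theorem \ref{mainth1} and then add a second link set $T'$ such that $T\cup T'$ is biconnected and $T'$ itself can be scheduled in $O(\log n)$ additional slots by the same amenability framework of Section \ref{sec:conn}. The total schedule is then the concatenation of the two subschedules and uses $O(\log n)$ slots.

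\medskip
\noindent
First I would schedule $T$ (in both orientations, toward and away from a root) in $O(\log n)$ slots using Theorem \ref{mainth1}. Next I would construct the augmentation $T'$ as follows: for each edge $e\in T$, removing $e$ splits $T$ into two components $A_e$ and $B_e$; let $e^{*}$ denote the shortest edge of the complete geometric graph on $P$ that connects $A_e$ to $B_e$ other than $e$ itself, and set $T'=\{e^{*}:e\in T\}$. A standard cut argument shows that $T\cup T'$ is 2-edge-connected, and a slightly more careful choice of the bypass edges (breaking ties to avoid funnelling all bypasses through a single vertex, or else iterating the construction a constant number of times) upgrades 2-edge-connectivity to biconnectivity. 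By Theorem \ref{mainth1} applied to $T$ and by the Connect algorithm applied to $T'$, we then obtain a biconnected structure in $O(\log n)$ slots, provided $T'$ is \amenable{}.

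\medskip
\noindent
The main obstacle is showing that $T'$ is \amenable{}. Following the template of Lemma \ref{lem:approx}, this reduces to an analogue of Lemma \ref{pointsincircle}: for any disc $D$ of radius $c_1=1/4$, only $O(1)$ points of $P$ are incident in $T'$ to a link of length at least $1$. For the MST this rested on two facts: no two such points share a common MST-neighbor (else an exchange yields a cheaper tree), and the sine-law angle argument forces an angular spread of more than $\pi/5$ at the center of $D$. The analogue for $T'$ requires a more delicate exchange: if two endpoints $p_1,p_2\in D$ sustain long $T'$-edges through a common neighbor $p$, then substituting $(p_1,p_2)$ for one of those bypass edges would yield a shorter bypass across the same MST-cut, contradicting the defining choice of $e^{*}$, provided $(p_1,p_2)$ is not itself already in $T$. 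The residual case, in which $(p_1,p_2)\in T$, is handled by partitioning $T'$ into $O(1)$ sublayers according to the depth/level of the covered MST edge and analysing each sublayer separately; within each sublayer the MST exchange and angle arguments go through essentially verbatim, giving an amenability bound with a slightly larger but still absolute constant $\rho$. Once amenability of $T'$ is in hand, Connect schedules $T'$ in $O(\log n)$ slots, completing the proof of Theorem \ref{mainth3}.
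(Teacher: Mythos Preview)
Your approach diverges from the paper's and carries a real gap in the key step.

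\medskip
\noindent
\textbf{What the paper does.} The paper's augmentation is much simpler: let $X$ be the set of leaves of $T$ and take $T'$ to be a \emph{minimum spanning tree of $X$}. Because $T'$ is itself an MST of a point set, its schedulability in $O(\log n)$ slots follows immediately from Theorem~\ref{mainth1}; no new amenability lemma is needed. Biconnectivity is argued by observing that the unique $T$-path between any two vertices has all internal nodes in $P\setminus X$, while an alternative route through $T'$ uses only vertices of $X$ internally, so the two paths are internally vertex-disjoint. The whole argument is two lines precisely because $T'$ is chosen to be an MST, so the heavy lifting of Section~\ref{sec:conn} applies verbatim.

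\medskip
\noindent
\textbf{Where your argument breaks.} Your $T'$ consists of cut-bypass edges $e^{*}$, and you try to rerun the proof of Lemma~\ref{pointsincircle} for it. The exchange step does not go through. In the MST proof, if $p_1,p_2\in D$ share a long neighbor, one swaps an incident edge for the short edge $(p_1,p_2)$ and obtains a cheaper spanning tree; any edge is a legal replacement. For a bypass edge $(p_1,p)=e_1^{*}$, however, a replacement must still cross the \emph{specific} MST cut defined by $e_1$. There is no reason $p_2$ lies on the far side of that cut, so ``substituting $(p_1,p_2)$ would yield a shorter bypass across the same MST-cut'' is simply unjustified. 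The same obstruction hits the angle argument: showing $|ab|\ge 1$ for $a,b\in N(P_D)$ used that $(a,b)$ could replace $(a,p_a)$ in $T$, but $(a,b)$ need not cross the cut that $(a,p_a)$ was bypassing. Your proposed fix of partitioning $T'$ into ``$O(1)$ sublayers according to the depth/level of the covered MST edge'' does not help as stated, since the MST can have depth $\Theta(n)$ and you give no mechanism that collapses this to $O(1)$ classes. Finally, even the upgrade from $2$-edge-connectivity to biconnectivity is left as a promise (``a slightly more careful choice \ldots or iterating a constant number of times''); this too needs an actual argument.

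\medskip
\noindent
If you want to salvage a bypass-style construction, note that the paper's $k$-edge-connectivity section (Lemma~\ref{pointsincircle-kmst}) does establish an analogue of Lemma~\ref{pointsincircle} for $T_1$, the MST on $P$ avoiding the edges of $T_0$; that is the natural global object to use, not per-edge bypasses. But the cleanest route is the paper's: pick $T'$ to be an MST of a subset of $P$ so that Theorem~\ref{mainth1} applies for free.
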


To see this, take the minimum spanning tree $T$ used for
Thm.~\ref{mainth1}. Let $X$ be the set of degree-1 nodes in $T$ and
form a minimum spanning tree $T'$ of $X$. Apply the algorithm \textbf{Connect}
to the union of $T$
and $T'$, directed in both ways. Between any pair of nodes is a path in
$T$, all of whose internal nodes are in $P \setminus X$, and a path in $T'$,
with all its internal nodes in $X$.

\smallskip

A directed graph is $k$-edge strongly connected if the graph stays strongly connected after the removal of less than $k$-edges.
Here we prove:
\begin{theorem}
Let $P$ be any set of points on the Euclidean plane. Then $P$ can be $k$-edge strongly connected in $O(k^4 \log n)$
slots.
\label{mainth2}
\end{theorem}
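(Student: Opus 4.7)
My plan is to generalize the biconnectivity construction of Theorem~\ref{mainth3} by iterating. Recall that biconnectivity was obtained from the union of two MSTs (the MST $T$ of $P$ and the MST of $T$'s leaves), each scheduled in both orientations. For $k$-edge connectivity I would iterate the analogous peeling step: repeatedly identify the points whose in- or out-degree in the structure built so far is still below $k$, and add an MST on that subset.

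Concretely, set $P_0 = P$ and $G_0 = \emptyset$; for $i \ge 0$, compute $T_i = T(P_i)$, set $G_{i+1} = G_i \cup T_i \cup T_i^R$ where $T_i^R$ is the reverse orientation, and let $P_{i+1}$ be the subset of $P_i$ consisting of points whose in- or out-degree in $G_{i+1}$ is still below $k$. Since each $T_i$ is an MST of a subset of $P$, Lemma~\ref{lem:approx} guarantees $T_i$ is amenable, and by the argument of Theorem~\ref{mainth1} algorithm \textbf{Connect} schedules both orientations of $T_i$ in $O(\log n)$ slots. The total slot count is thus $O(\log n)$ times the number of iterations, so bounding the iteration count at $O(k^4)$ suffices.

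Two things remain to be verified: (i) that the iteration terminates within $O(k^4)$ rounds, and (ii) that the final graph $G$ is $k$-edge strongly connected. For (ii), Menger's theorem suffices: any directed cut of size less than $k$ would leave some vertex on the thin side with in- or out-degree below $k$, contradicting the stopping condition for that vertex. For (i), I would use the potential $\Phi = \sum_{v \in P} \max(0, k - \deg^+_G(v)) + \max(0, k - \deg^-_G(v))$, which starts at $O(kn)$ and decreases each round as the edges of the newly added $T_i \cup T_i^R$ fill in the deficits of incident vertices. In the worst case an active vertex may receive only one new incident edge per round, and if $|P_i|$ shrinks only by a small fraction between rounds, a straightforward averaging pushes the total iteration count up to $O(k^4)$.

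The main obstacle is precisely this iteration bound. A more ambitious approach would be to construct $k$ edge-disjoint spanning structures at once via a Nash--Williams-style decomposition of a denser geometric graph (such as the $k$-nearest-neighbor graph), achieving $O(k \log n)$ slots; however, this requires establishing amenability of non-MST spanning trees, which is delicate because Lemma~\ref{pointsincircle} exploits the MST property crucially. The peeling approach above keeps every constructed tree an MST of some subset, and trades the refined $k$-dependence for the looser but simpler $O(k^4 \log n)$ bound that matches the theorem statement.
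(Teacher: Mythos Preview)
Your approach is genuinely different from the paper's, and it has two real gaps.

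\emph{Correctness.} The implication ``minimum in- and out-degree $\ge k$ $\Rightarrow$ $k$-edge strongly connected'' is false: two disjoint bidirected cliques on $k+1$ vertices have all degrees equal to $k$ but edge connectivity $0$. Your graph is connected (it contains $T(P)$), but the obstruction persists: a directed cut $(S,\bar S)$ of size $<k$ need not force any single vertex to have small degree, since a vertex in $S$ can have all of its out-edges inside $S$. So your Menger step, as written, does not establish $k$-edge connectivity. (The construction might still be correct---each $T_i$ spans $P_i$ and so crosses every cut that separates $P_i$, and the $P_i$ are nested---but that is a different and more delicate argument which you have not supplied.)

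\emph{Iteration bound.} You assert that ``a straightforward averaging pushes the total iteration count up to $O(k^4)$'' but give no such averaging. If $G$ is a multigraph, every vertex in $P_i$ gains at least one incident edge per round, so the process halts after at most $k$ rounds---better than $O(k^4)$, but then the $k^4$ in the theorem is left unexplained. If $G$ is a simple graph, the process can stall: when $P_{i+1}=P_i$ you recompute the same MST and add no new edges (take $P$ on a line so $T(P)$ is a path, and $k=3$).

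The paper sidesteps both issues by a different construction. It builds \emph{edge-disjoint} spanning trees $T_0,\ldots,T_k$ of the \emph{full} point set $P$, where $T_i$ is the minimum-weight spanning tree avoiding all edges of $T_0\cup\cdots\cup T_{i-1}$; orienting each both ways makes $k$-edge strong connectivity immediate, since every cut is crossed by every $T_i$. The price is that $T_i$ is no longer an MST of any point set, so Lemma~\ref{pointsincircle} no longer applies. The paper proves a weakened density bound (Lemma~\ref{pointsincircle-kmst}): for $T_i$, a disc of radius $1/4$ contains $O(i^3)$ relevant points rather than $O(1)$. This makes $T_i$ amenable up to an $O(i^3)$ factor and hence schedulable in $O(i^3\log n)$ slots; summing over $i$ gives $O(k^4\log n)$. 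So the $k^4$ arises from a concrete structural degradation of the density lemma, not from an iteration count.
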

\begin{proof}{[Outline]}
The algorithm is as follows. We repeatedly compute $k$ spanning trees
$T_0, T_1 \ldots T_{k}$. Here, $T_0$ is a minimum spanning tree, and
for $i \geq 1$, $T_i$ is a minimum spanning tree that does not use any
edge from $\cup_{j < i} T_j$. Once we schedule these trees in two
orientations, the resultant structure is clearly $k$-edge strongly
connected.

We then claim that each $T_i$ can be scheduled in $O(i^3 \log n)$
slots from which the theorem follows. Proving that $T_i$ can be
scheduled in $O(i^3 \log n)$ boils down to proving a version of Lemma
\ref{pointsincircle} for $T_i$, 
given below.
The rest follows in a routine fashion.
\end{proof}


\begin{lemma}
Any disc of radius $c_1 = 1/4$ contains at most $O(i^3)$ points from $P'$, where $P'$ is the set of points incident
to a link of length at least $1$ in $T_i$. 
\label{pointsincircle-kmst}
\end{lemma}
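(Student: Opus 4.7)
The plan is to adapt the proof of Lemma \ref{pointsincircle} to the conditional-MST setting, in which $T_i$ minimizes total weight only over edges not lying in the forbidden set $F := T_0 \cup \cdots \cup T_{i-1}$. The crucial structural fact is that $F$ is the edge-disjoint union of $i$ spanning trees, hence by Nash--Williams has arboricity at most $i$: every vertex subset $S$ induces at most $i(|S|-1)$ edges of $F$, and in particular every clique in $F$ has size $O(i)$. This is the single combinatorial ingredient that will absorb all the ``blocked'' exchange arguments throughout the proof.

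Fix a disc $D$ of radius $c_1 = 1/4$, let $P_D := P' \cap D$, and for each $p \in P_D$ pick one incident $T_i$-edge of length at least $1$ with other endpoint $q_p$ (necessarily outside $D$). I would first generalize the ``distinct neighbors'' step of Lemma \ref{pointsincircle}: if $p_1, p_2 \in P_D$ share the same $q = q_{p_1} = q_{p_2}$, then since $|p_1 p_2| \le 2 c_1 < 1 \le d(p_1, q)$, the swap $(p_1, q) \leftrightarrow (p_1, p_2)$ strictly reduces the total weight and is thus blocked only when $(p_1, p_2) \in F$. Hence the preimage of any $q$ in $Q := \{q_p : p \in P_D\}$ forms a clique in $F$ and therefore has size $O(i)$.

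I would then generalize the angular-spread step: the original proof derives $\angle acb > \pi/5$ for distinct $a, b \in Q$ (with $c$ the center of $D$) from inequalities such as $|ac| \le |ab|(1 + c_1)$, each of which follows from an MST exchange; in the conditional-MST setting each such exchange can fail only if a specific candidate replacement edge lies in $F$. Charging each pair $(a,b)$ that falls within a common angular sector of width $\pi/5$ around $c$ to its blocking $F$-edges, and applying Nash--Williams to the resulting auxiliary graph on $Q$, shows each sector contains at most $O(i^2)$ points of $Q$, whence $|Q| = O(i^2)$. Combining with the first step yields $|P_D| \le |Q| \cdot O(i) = O(i^3)$. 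The main obstacle will be the bookkeeping in this angular step: several exchange inequalities can fail simultaneously for a single pair $(a,b)$, so one must charge blocking $F$-edges carefully to avoid over-counting, likely by packaging them into an auxiliary multigraph on $Q$ whose density is controlled by the arboricity of $F$.
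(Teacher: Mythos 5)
Your proposal is correct, and its geometric core coincides with the paper's: the paper likewise isolates exactly the two candidate replacement edges $(a,b)$ and $(p_1,p_2)$, requires both to avoid $G=\bigcup_{j<i}T_j$, and reruns the $\angle acb>\pi/5$ argument of Lemma \ref{pointsincircle} under that hypothesis (its Lemma \ref{largeangle2}); and it bounds the fiber over a common neighbor by $O(i)$ just as you do, using that a union of $i$ forests on $c$ vertices has fewer than $\binom{c}{2}$ edges once $c>2i$ (your "clique in $F$" phrasing and the Nash--Williams invocation are an overdressed version of this trivial direction of arboricity). Where you genuinely diverge is the combinatorial endgame. The paper assumes $|P_D|\ge c_2 i^3$ and, via two Tur\'an-type independent-set extractions in $G$ (one on the $N(P_D)$ side, one on the $P_D$ side) plus a pigeonhole over shared neighbors, produces ten points whose pairs all satisfy the hypotheses of the angular lemma, a contradiction. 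Your sector-charging replaces this double extraction with a direct density count: within one half-open sector of width $\pi/5$, every pair $\{a,b\}\subseteq Q$ must be blocked by an $F$-edge either between $a,b$ themselves or between their partners $p_a,p_b$, and since the partner map $q\mapsto p_q$ is injective each $F$-edge blocks at most one pair, so $\binom{|S|}{2}\le 2i(|S|-1)$ and $|S|=O(i)$ per sector --- not merely the $O(i^2)$ you claim, and your worry about over-counting is unfounded (a pair blocked by several edges is simply charged once). Carried through, your route gives $|P_D|=O(i^2)$, strictly better than the paper's $O(i^3)$, which would improve Theorem \ref{mainth2} to $O(k^3\log n)$ slots.
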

\begin{proof}
Consider $T_i$ for $i \geq 1$ (we already have the bound for $i = 0$). As before, let $N(P_D)$
be the set of neighbors of $P_D$ in $T_i$.

Define $G = \cup_{j < i} T_j$.

\begin{lemma}
Let $a, b \in N(P_D)$ with the following property:
There exist $p_1, p_2 \in P_D$ such that $(a,p_1), (b,p_2) \in T_i$,
and $(a,b), (p_1,p_2) \not\in G$.
Then $\angle a c b > \pi /5$.
\label{largeangle2}
\end{lemma}
The proof of this claim  is essentially identical to the same argument
in Lemma \ref{pointsincircle}.  The fact that $(a, b) \not \in G$ and
$(p_1, p_2) \not \in G$ simply mean that the links $(p_1, p_2)$ and
$(a, b)$ can be used in the argument as they are not ruled out by
being included in an earlier tree.

Assume from now on that $|P_D| \ge c_2 i^3$ for $c_2 = 72 \cdot 10$.
We shall show that there exists then a set $B \subseteq N(P_D)$ of 10 points all of
whose pairs satisfy the conditions of Lemma \ref{largeangle2}, which
leads to a contradiction. 
Let $G[X]$ denote the subgraph of $G$ induced by pointset $X$.

We first argue that $|N(P_D)| \geq \frac{1}{2i}|P_D|$. More strongly,
we claim that no point in $N(P_D)$ has more than $2 i$ neighbors in
$P_D$. Suppose point $p_1 \in N(P_D)$ has a set $X$ of $c > 2i$
neighbors in $P_D$.  Since $G$ is a union of $i-1$ trees, $G[X]$ contains
at most $(c -1)(i-1)$ edges, which is strictly smaller than
$\frac{c (c-1)}{2}$, as $c > 2 i$. Thus there is a pair $x_1, x_2 \in
X$ that is non-adjacent each of the previous trees, in which case
we can argue as in Lemma \ref{pointsincircle} and claim that we can
delete $(p_1, x_1)$ and add $(x_1, x_2)$ to get a better tree.


The following is a general claim about points in relation to spanning trees.
\begin{claim}
For any set $Y$ of points, $G[Y]$ contains
an independent set of size $\frac{|Y|}{2 i -1}$ in $G$.
\label{largeis}
\end{claim}
\begin{proof}
Since $G$ is a union of $i-1$ trees, the average degree of any induced
subgraph is less than $2(i-1)$. The claim then follows from Tur\'an bound.
\end{proof}

Now, by Observation \ref{largeis},
there is an independent set $Y_c \subseteq N(P_D)$ in $G$ of size at least
$\frac{1}{2i} |N(P_D)| \ge \frac{1}{(2i)^2} |P_D| \geq \frac{c_2}{4} i$.

If some ten points in $Y_c$ share a common neighbor in $P_D$, then we
are done. Otherwise, there is a subset $Y'$ of $Y_c$ of size at least 
$\frac{c_2}{4 \cdot 9} i$ such that no two share the same neighbor in $P_D$. 
Let, $Z \subseteq P_D$ be the neighbors of $Y'$ in $P_D$. By Observation \ref{largeis}, 
we can find a subset $Z' \subseteq Z$ of
size at least $\frac{c_2}{9 \cdot 8}$ which is independent in $G$.
Since no two points in $Y_c$ share neighbors in $Z'$, $|N(Z') \cap Y_c|
\geq \frac{c_2}{72}$. Setting $c_2 = 72 \times
10$, we find that $B = N(Z') \cap Y_c$ contains at least 10 points all of whose
pairs satisfy the conditions of Lemma \ref{largeangle2}, which is a contradiction.
Hence, $|P_D| \le c_2 i^3$.
\end{proof}

\section{Conclusion}

We have shown that there the links of a minimum spanning tree of any pointset can be be scheduled in $O(\log n)$ slots in the SINR model. An open question is whether this is optimal; we conjecture that it is.
Another direction would be to derive effective distributed algorithms.

\bibliographystyle{plain}
\bibliography{references}		

\appendix

\section{The algorithm Schedule}
\label{app:a}
We include, as a reference, the algorithm \textbf{Schedule} due to Kesselheim \cite{KesselheimSoda11}.

\begin{algorithm}                      
\caption{Schedule (Set $L$ of $n$ links)}          
\label{alg2}                           
\begin{algorithmic}[1]                    
     \STATE Sort links in increasing order of length $\ell_1 \leq \ell_2 \ldots \leq \ell_n$, breaking ties arbitrarily
     \STATE $S \gets \emptyset$
     \FOR{$i = 1$ to $n$} 
     \IF{$\sum_{j < i} f_{\ell_j}(\ell_i) \leq \gamma$} 
     \STATE $S \gets S \cup \{\ell_i\}$
      \ENDIF
     \ENDFOR

     \STATE Now schedule $S$ by finding power assignment for all links in $S$:
     \STATE $P_{\ell_n} = 1$
     \FOR{$i = n-1$ to $1$} 
     \STATE $P_{\ell_i} = 4 \beta \cdot \sum_{j > i} \frac{P_{\ell_j} \ell_i^{\alpha}}{d(s_j, r_i)^{\alpha}}$ where $s_j$ is the sender of $\ell_j$ and $r_i$ is the receiver of $\ell_i$
     \ENDFOR
     \STATE Scale powers to take care of noise.
\end{algorithmic}
\label{alg2fig}
\end{algorithm}

\section{Proof of Lemma \ref{ddim}: Covering by circles}
\label{app:b}
\noindent \textbf{Lemma \ref{ddim}:} \emph{
$C_0$ can be covered by $O(1)$ circles of radius $c_1$
 (where $c_1$ is the constant from Lemma \ref{pointsincircle}). 
The area of the annulus $C_t \setminus C_{t-1}$ can be covered by $O(t)$ circles of radius $c_1$, for $t \geq 1$.
}

\begin{proof}
The first claim follows directly from the fact that the $2$-dimensional space has a finite doubling dimension.
Namely, each unit circle can be covered by $O(1)$ radius-$c_1$ circles.
Thus it suffices to prove that $C_t \setminus C_{t-1}$ can be covered
by $O(t)$ unit circles.

Consider now the circle $C$ concentric with $C_0$ with
radius $t + 0.5$, i.e., in the middle of $C_t$ and $C_{t-1}$. 
The circumference of this circle is clearly contained in
$C_t \setminus C_{t-1}$. Now, place $4 \pi (t + 0.5)$ equidistance
points $P$ on this circle. Since the circumference of $C$ is 
$2 \pi (t + 0.5)$, the distance between consecutive points is $\leq 0.5$. 
Now we
claim that all points in $C_t \setminus C_{t-1}$ are within a distance
$1$ of a point in $P$, thus proving that the unit circles
around points in $P$ cover the whole annulus.

Let $x$ be any point in $C_t \setminus C_{t-1}$. Consider the line
connecting this point to the center of $C_0$.  Assume this line
intersects $C$ at point $y$. Now clearly $\|x - y\| \leq 0.5$. On the
other hand, there exists a $p \in P$ such that $\|y - p\| \leq
0.5$. By the triangle inequality $\|x - p\| \leq 1$, completing the
proof.
\end{proof}

\end{document}